\title[Solitons on noncommutative spaces]{Sigma-model solitons on noncommutative spaces}               
\date{August 2015}
\author[L.~Dabrowski]{Ludwik Dabrowski}
\address{SISSA (Scuola Internazionale Superiore di Studi Avanzati),
via Bonomea 265, 34136 Trieste, Italy }
\email{dabrow@sissa.it}
\author[G.~Landi]{Giovanni Landi}
\address{Matematica, Universit\`{a} di Trieste, Via A.~Valerio~12/1, I-34127 Trieste, Italy,
and INFN, Sezione di Trieste, Trieste, Italy}
\email{landi@units.it}
\author[F.~Luef]{Franz Luef}
\address{Department of Mathematics, NTNU Trondheim, 7041 Trondheim, Norway}
\email{franz.luef@math.ntnu.no}
\numberwithin{equation}{section}
\begin{document}

\newtheorem{theorem}{Theorem}[section]
\newtheorem{definition}[theorem]{Definition}
\newtheorem{lemma}[theorem]{Lemma}
\newtheorem{proposition}[theorem]{Proposition}
\newtheorem{corollary}[theorem]{Corollary}
\newtheorem{example}[theorem]{Example}
\newtheorem{remark}[theorem]{Remark}
\newcommand{\beq}{\begin{equation}}
\newcommand{\eeq}{\end{equation}}
\newcommand{\nn}{\nonumber}
\newcommand{\dd}{\mathrm{d}}
\newcommand{\mi}{\mathrm{i}}
\newcommand{\ee}{\mathbb{e}}
\newcommand{\IC}{\mathbb{C}}
\newcommand{\IN}{\mathbb{N}}
\newcommand{\IR}{\mathbb{R}}
\newcommand{\IT}{\mathbb{T}}
\newcommand{\IZ}{\mathbb{Z}}
\newcommand{\pa}{\partial}
\newcommand{\opa}{\overline{\partial}}
\newcommand{\onabla}{\overline{\nabla}}
\newcommand{\cE}{\mathcal{E}}
\newcommand{\cH}{\mathcal{H}}
\newcommand{\cP}{\mathcal{P}}
\def\hs#1#2{\left\langle #1,#2\right\rangle}  %
\def\lhs#1#2{{_\bullet\!\!}\left\langle #1,#2\right\rangle}
\def\rhs#1#2{\left\langle #1,#2\right\rangle\!\!{_\bullet}}
\newcommand{\cS}{\mathcal{S}}
\newcommand{\tr}{\mathrm{tr}}
\newcommand{\ac}{\cA_c}

\newcommand{\cAtheta}{\mathcal{A}_\theta}
\newcommand{\cAab}{\mathcal{A}_{\alpha\beta}}

\newcommand{\mcA}{\mathcal{A}}
\newcommand{\cA}{\mathcal{A}}
\newcommand{\cAt}{\mathcal{A}_\theta}
\newcommand{\Alafgra}{{_\mathcal{A}}\langle f,g\rangle}
\newcommand{\lafgraB}{\langle f,g\rangle_\mathcal{B}}

\newcommand{\akl}{a_{kl}}
\newcommand{\amn}{a_{mn}}

\newcommand{\AiLa}{\mathcal{A}^1(\Lambda,c)}
\newcommand{\AisLa}{\mathcal{A}^1_s(\Lambda,c)}
\newcommand{\AiLacirc}{\mathcal{A}^1(\Lambda^\circ,\overline{c})}
\newcommand{\AisLacirc}{\mathcal{A}^1_s(\Lambda^\circ,\overline{c})}
\newcommand{\AinfLa}{\mathcal{A}^\infty(\Lambda,c)}
\newcommand{\AinfsLa}{\mathcal{A}^\infty_s(\Lambda,c)}
\newcommand{\AinfLacirc}{\mathcal{A}^\infty(\Lambda^\circ,\overline{c})}
\newcommand{\AinfsLacirc}{\mathcal{A}^\infty_s(\Lambda^\circ,\overline{c})}
\newcommand{\Ata}{\mathcal{A}_\theta}
\newcommand{\aZZ}{\alpha\mathbb{Z}}
\newcommand{\bZZ}{\beta\mathbb{Z}}
\newcommand{\aZbZ}{\alpha\mathbb{Z}\times\beta\mathbb{Z}}
\newcommand{\thZZ}{\theta\mathbb{Z}\times\mathbb{Z}}
\newcommand{\ZthZ}{\mathbb{Z}\times\theta^{-1}\mathbb{Z}}

\newcommand{\mcB}{\mathcal{B}}
\newcommand{\cB}{\mathcal{B}}

\newcommand{\bkl}{b_{kl}}
\newcommand{\bmn}{b_{mn}}

\newcommand{\la}{\langle}
\newcommand{\ra}{\rangle}
\newcommand{\mcC}{\mathcal{C}}
\newcommand{\CC}{\mathbb{C}}
\newcommand{\CmS}{\mathrm{C}^*}
\newcommand{\CstLatw}{\mathrm{C}^*(\Lambda,c)}
\newcommand{\CstLacirctw}{\mathrm{C}^*(\Lambda^\circ,\overline{c})}
\newcommand{\cAVcB}{{_\mathcal{A}}{V}_\mathcal{B}}

\newcommand{\GabFrame}{A\|f\|_2^2\le\sum_{\lambda\in\Lambda}|\langle f,\pi(\lambda)g\rangle|^2\le B\|f\|_2^2}
\newcommand{\GmcgLa}{\mathcal{G}(g,\Lambda)}
\newcommand{\Ggab}{\mathcal{G}(g,\alpha\ZZ\times\beta\ZZ)}

\newcommand{\cK}{\mathcal{K}}
\newcommand{\livsp}{\ell^1_v}
\newcommand{\LtR}{L^2(\mathbb{R})}
\newcommand{\LtRR}{L^2(\mathbb{R})}

\newcommand{\piakbl}{\pi(\alpha k,\beta l)}
\newcommand{\pibkal}{\pi(\tfrac{k}{\beta},\tfrac{l}{\alpha})}
\newcommand{\piab}{\pi(\alpha k),\beta m)}
\newcommand{\piabg}{\pi(\alpha k),\beta m)}
\newcommand{\pithkl}{\pi(\theta k,l)}
\newcommand{\pikthl}{\pi(k,\theta^{-1}l)}

\newcommand{\R}{\mathbb{R}}
\newcommand{\RR}{\mathbb{R}}
\newcommand{\RRt}{{\mathbb{R}^2}}

\newcommand{\SR}{\cS(\R)}
\newcommand{\SRt}{\cS(\RRt)}

\newcommand{\Sgab}{S_{g}^{\alpha,\beta}}

\newcommand{\trt}{\mathrm{tr}_\Lambda}
\newcommand{\TT}{\mathbb{T}}
\newcommand\WLinfliv{{ \Wsp(\Linsp,\livsp) }}   %W(Linf,l1v)
\newcommand{\Z}{\mathbb{Z}}
\newcommand{\ZZ}{\mathbb{Z}}
\newcommand{\ZZt}{\mathbb{Z}^2}

\def\Xint#1{\mathchoice
{\XXint\displaystyle\textstyle{#1}}%
{\XXint\textstyle\scriptstyle{#1}}%
{\XXint\scriptstyle\scriptscriptstyle{#1}}%
{\XXint\scriptscriptstyle\scriptscriptstyle{#1}}%
\!\int}
\def\XXint#1#2#3{{\setbox0=\hbox{$#1{#2#3}{\int}$ }
\vcenter{\hbox{$#2#3$ }}\kern-.6\wd0}}
\def\ddashint{\Xint=}
\def\hint{\Xint-}

\subjclass[2010]{Primary: 58E20, 35C08; Secondary: 42C15, 58B34 42B35}
\keywords{Noncommutative sigma-models. Self-duality equations. Solitons. Moyal plane. Noncommutative tori. Time-frequency analysis. Gabor analysis. Frames. Morita duality bimodules. \\}

\begin{abstract}
We use results from time-frequency analysis and Gabor analysis to construct new classes of sigma-model 
solitons over the Moyal plane and over noncommutative tori, taken as source spaces, with a target space made of two points. A natural action functional leads to self-duality equations for projections in the source algebra. Solutions, having non-trivial topological content, are constructed via suitable Morita duality bimodules.  
\end{abstract}

\maketitle

\vfill
\tableofcontents

\parskip 1ex
\linespread{1.2}

\vfill
\newpage

\section{Introduction}

Noncommutative analogues of non-linear sigma-models or, said otherwise, noncommutative harmonic maps were introduced in \cite{DKL00,DKL03}. Already the simplest example of a target space made of two points --- for which the commutative theory is trivial --- showed remarkable properties when taken together with a noncommutative source space. A natural action functional led to self-duality equations for projections in the source algebra, with solutions (on noncommutative tori) having non-trivial topological content (topological charges). Some generalizations were presented in \cite{MR11} and further results in \cite{le14}. 

The constructions of \cite{DKL00,DKL03} relied on complex structures and the existence of holomorphic structures for projective modules over noncommutative spaces. The projections in the source algebra were coming from a 
Morita equivalence bimodule. 

Now, an important instance of the Morita equivalence occurs between the algebra of compact operators on a separable Hilbert space and the algebra $\IC$, a result which is equivalent \cite{Ri72} to the uniqueness of the Schr\"odinger representation of the CCR (for the two-dimensional quantum phase space of a free particle). When thinking of a space of positions rather than of the phase space (an interpretation we favor for the discussion of sigma models), the same algebra is now the algebra of the so called Moyal plane.

In turn, when passing from the Schr\"odinger representation of phase space to that of lattices in phase space, 
the same construction of Morita equivalence bimodules over the Moyal plane leads to equivalence bimodules over 
noncommutative tori whose noncommutativity parameters are related by a fractional transformation with integer coefficients \cite{Ri81}.
At the smooth level, the bimodule that implements both instances of Morita equivalence for the lowest value of `the rank' is the underlying vector space $\SR$ of Schwartz functions on $\IR$.

Most remarkably, these same objects appear in the field of time-frequency 
analysis and Gabor analysis \cite{ga46}. In the light of the relation between these and noncommutative geometry established in \cite{lu09,lu11}, 
the present paper is centered on the use of several results from time-frequency analysis and Gabor analysis to construct new classes of sigma-model solitons over the Moyal plane and over noncommutative tori. 

%The input from signal analysis comes from time-frequency analysis and Gabor analysis \cite{ga46}. 
In the setting of noncommutative 
geometry the main goal of time-frequency and Gabor analysis is to find generators for projective modules over noncommutative tori. 
%It is going to be these results that provide new classes of solitons over noncommutative tori. 
It is these results that will provide new classes of solitons over noncommutative tori. 

In \S\ref{se:afse} 
we recall the noncommutative sigma model of maps from a virtual space underlying a noncommutative algebra $\cA$
to the set of two points, as given in terms of projections in the algebra $\cA$.
The dynamics is governed by a quadratic action functional, defined in terms of suitable derivations and an invariant trace on $\cA$.  
The functional is shown to be estimated from below by a {\it topological charge} (a Chern number), with extremal points which are solutions of first order differential self-duality equations for projections. 

In \S\ref{se:eqbi} we give some basic material on smooth Morita equivalence between a pair
of algebras $\cA$ and $\cB$ via an equivalence $\cA-\cB$ bimodule $\cE$.
We spell out our requirements on the compatible actions of a pair of derivations 
and of their lift (implementation) on the module $\cE$ in terms of a pair of covariant derivatives (a connection).
The  self-duality equation for projections in the algebra $\cA$ of \S\ref{se:afse} is given as  
a generalized eigenfunction problem for an anti-holomorphic connection $\onabla$ on the module $\cE$ 
with eigenfunctions in the algebra $\cB$. The topological charge, which measures the nontriviality of the 
module $\cE$, is expressed in terms of the curvature of the connection on $\cE$. 

Results from the Schr\"odinger representation that are needed in the later part  are in \S\ref{se:sr}. 

For the Moyal plane in \S\ref{se:moyp}, the general solutions of the generalized eigenfunction problem for the operator $\onabla$ are given by Gaussians functions. Now the (constant) curvature of the connection is interpreted as Heisenberg commutation relations and the self-duality equation for the corresponding (Gaussian) projections is the equation for the minimizers of the Heisenberg uncertainty relation.
More generally, for noncommutative tori in \S\ref{se:nct}, solutions of the generalized eigenfunction problem for $\onabla$ are obtained from 
Gabor frames: these solutions include Hermite functions and totally positive functions of finite type. It is remarkable that 
for these solution we rely on two cornerstones of the field of Gabor analysis, namely the duality theory and 
the Wexler--Raz identity \cite{dalala95,ja95,rosh97,rawe90}.

\subsubsection*{Acknowledgments.}
Part of the work was carried out during visits in Vienna at the Erwin Schr\"odinger Institute for Mathematical Physics and in Bonn at the Hausdorff Research Institute for Mathematics. We thank the organizers of the Programs for the invitation and all people at ESI and HIM for the nice hospitality. 
This work was partially supported by the Italian Project       
`Prin 2010-11 -- Operator Algebras, Noncommutative Geometry and Applications, and by the GNSAGA of the Italian `Istituto Nazionale di Alta  Matematica' (INdAM).  L.D. acknowledges a partial support
from HARMONIA NCN grant 2012/06/M/ST1/00169.

\section{The action functional and the soliton equations}\label{se:afse}

Noncommutative analogues of non-linear sigma-models were constructed in  \cite{DKL00,DKL03}. 
The simplest example was with a target space made of two points $M = \{1,2\}$.
Since any continuous map from a connected surface $\Sigma$ to a discrete space is constant, a commutative theory would be trivial. 
This is not the case if the source space is `noncommutative' 
and there are, in general, nontrivial such maps. Now they have to be interpreted `dually', i.e. as $*$-algebra morphisms from the algebra of functions over $M = \{1,2\}$, that is $\IC^{2}$, to the algebra
$A$ of the noncommutative source space.
Since as a vector space $\IC^{2}$ is generated by the projection (self-adjoint idempotent) function $e$ defined by $e(1)=1$ and $e(2)=0$, any $*$-algebra morphism $\pi : \IC^{2} \to A$ is identified with a projection $p=\pi(e)$ in $A$. As a consequence the configuration space of a two point target space sigma-model is the  collection of all projections $\cP(A)$ in the algebra $A$.

For the dynamics of the model we need additional structures. 
%At the continuous level, we take the $C^*$-algebra $A$ to carry an 
%%ergodic 
%action of the torus group $\IT^2$ with $A^\infty$ denoting the dense pre $C^*$-algebra of corresponding smooth elements, and infinitesimal generators of the action denoted $\pa_1$ and $\pa_2$: these are derivations of $A^\infty$. Since we work mostly at the smooth level, to lighten notations we rename $A^\infty$ to $\cA$ from now on. 
At the continuous level, we take the $C^*$-algebra $A$ to carry an 
action of the torus group $\IT^2$ with $\cA$ denoting the dense pre $C^*$-algebra of corresponding smooth elements, and infinitesimal generators of the action denoted $\pa_1$ and $\pa_2$: these are derivations of $\cA$. 
Also, we assume $\cA$ to have an invariant faithful tracial state $\tr$; the invariance means that $\tr(\pa_1(a))=\tr(\pa_2(a))=0$ for all $a\in\cA$. 
Then, on the configuration space $\cP(\cA)$ we consider the $\IR^+$-valued action-functional
\beq\label{actfun}
S[p] = \frac{1}{4 \pi} \tr \big( (\pa_1 p)^2 + (\pa_2 p)^2 \big) .
\eeq
We stress that this action functional depends on the choice of a metric. In the present situation we work with the flat one,  
with respect to which the derivatives are orthogonal and normalized.
Using the identity $p^2=p$ and the trace property, one has also
$$ %\label{actfun1}
S[p] = \frac{1}{2 \pi} \tr \, p \left[ (\pa_1 p)^2 + (\pa_2 p)^2 \right] .
$$
Moreover, using the natural complex structure on $\cA$ given by
$$ %\label{cstr}
\pa = \pa_1 - \mi \, \pa_2, \qquad \opa = \pa_1 + \mi \, \pa_2 ,
$$
the action functional can be written as
$$ %\label{actfun0}
S[p] = \frac{1}{4 \pi} \tr ( \pa p \opa p ) .
$$

As usual, the critical points of the action functional \eqref{actfun} are obtained by equating to zero its first
variation, that is the linear term in an infinitesimal variation $\delta S[p]$ = $S[p+\delta p] - S[p]$
for $\delta p \in T_p(\cP(\cA))$.
By `integrating by parts' and using the invariance of the integral to get rid of the `boundary  terms',
and the trace property, one easily gets:
\beq\label{first}
0 = \delta S[p] = - \frac{1}{2 \pi} \tr \, \Delta(p) \, \delta p ,
\eeq
where $\Delta = \frac{1}{2} (\pa \opa + \opa \pa) = \pa_1^2 + \pa_2^2$,
is the Laplacian of the flat metric.
On the other hand, the most general element $\delta p \in T_p(\cP(\cA))$ is not arbitrary but rather of the form
\beq\label{tanpro}
\delta p = (1-p) z p + p z^* (1-p),
\eeq
with $z$ an arbitrary elements in $\cA$. This is because, starting from a general expression
$$
\delta p = p x p + (1-p) y (1-p) + (1-p) z p + p z^* (1-p),
$$
the condition that this expression is an idempotent to first order, $(p + \delta p)^2 = p + \delta p + O(\delta p)$, forces $x=y=0$, while the additional condition that it is hermitian, $(\delta p)^* = \delta p$, forces $w=z^*$.
When substituting \eqref{tanpro} into \eqref{first}, using again the trace property one arrives at
\begin{equation*}
0 = - \frac{1}{4 \pi} \tr \big( \left[ p ~\Delta(p)~(1-p)\right] z  + \left[ (1-p) ~\Delta(p) ~p \right]z^* \big) .
\end{equation*}
Since $z$ is arbitrary one finally gets the equations for the critical points:
\begin{equation*}
p ~\Delta(p) ~(1-p) = 0 \qquad {\rm and} \qquad (1-p) ~\Delta(p) ~p = 0,
\end{equation*}
or, equivalently, the following non-linear equations of the second order
\beq\label{eom}
p ~\Delta(p) - \Delta(p) ~p = 0.
\eeq

As mentioned, the action functional in \eqref{actfun} depends on a metric. 
On the other hand, there is a {\em topological charge}, 
that is a quantity not depending on the metric, given by  
\beq
\label{topch}
c_1(p):=
\frac{1}{2\pi \mi} \, \tr \big( p (\pa_1 p \pa_2 p -\pa_2 p \pa_1 p ) \big) .
\eeq
The normalization in \eqref{topch} is such that, in all cases of interest of the present paper, for all projections $c_1(p)$ it is an integer.  It is in fact the index of a Fredholm operator.

A remarkable fact is that in a component of the space $\cP(\cA)$ with a fixed value of $c_1(p)$ the equations \eqref{eom} lead to first order equations as
a consequence of the fact that the topological quantity $c_1(p)$ is a lower bound for the action functional $S[p]$. 
\begin{proposition}
For any $p \in \cP(\cA) $ it holds that 
\beq\label{bpbou}
S[p] \geq |c_1(p)| .
\eeq
\end{proposition}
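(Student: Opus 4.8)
The plan is to prove \eqref{bpbou} by a Bogomolny/BPS-type completion-of-the-square, the only analytic input being positivity (and faithfulness) of the trace. First I would pass to the complex notation. Since $p=p^*=p^2$ and each $\pa_j$ is a $*$-derivation, the elements $\pa_j p$ are self-adjoint, so $(\pa p)^*=\opa p$. Differentiating $p^2=p$ gives $(\pa p)p+p(\pa p)=\pa p$, which immediately yields the ``off-diagonal'' relations $p(\pa p)p=0$ and $(1-p)(\pa p)(1-p)=0$. Setting $A:=p(\pa p)(1-p)$ and $B:=(1-p)(\pa p)p$, one then has $\pa p=A+B$, $p\,\pa p=A$, $(\pa p)p=B$, and correspondingly $\opa p=A^*+B^*$ with $A^*=(1-p)(\opa p)p$, $B^*=p(\opa p)(1-p)$, and $p\,\opa p=B^*$.

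Next I would express the action and the topological charge through $A$ and $B$. Expanding $S[p]=\frac{1}{4\pi}\tr(\pa p\,\opa p)=\frac{1}{4\pi}\tr\big((A+B)(A^*+B^*)\big)$, the cross terms $AB^*$ and $BA^*$ each carry a factor $p(1-p)=0$, leaving $S[p]=\frac{1}{4\pi}\big(\tr(AA^*)+\tr(BB^*)\big)$. For the charge I would first rewrite \eqref{topch} in the $(\pa,\opa)$ notation: substituting $\pa_1=\tfrac12(\pa+\opa)$, $\pa_2=\tfrac{\mi}{2}(\pa-\opa)$ one gets $\pa_1 p\,\pa_2 p-\pa_2 p\,\pa_1 p=\tfrac{\mi}{2}(\opa p\,\pa p-\pa p\,\opa p)$, hence $c_1(p)=\frac{1}{4\pi}\tr\big(p(\opa p\,\pa p-\pa p\,\opa p)\big)$. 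Now using $p\,\pa p=A$, $p\,\opa p=B^*$ and again discarding the terms with a $p(1-p)$ factor, this collapses to $c_1(p)=\frac{1}{4\pi}\big(\tr(BB^*)-\tr(AA^*)\big)$; in particular $c_1(p)$ is real.

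To conclude: $\tr$ is a positive trace (it extends to a faithful tracial state on the $C^*$-completion), so $\tr(AA^*)=\tr(A^*A)\ge 0$ and $\tr(BB^*)\ge 0$. The elementary inequality $a+b\ge|a-b|$ for $a,b\ge 0$ then gives $S[p]=\frac{1}{4\pi}\big(\tr(AA^*)+\tr(BB^*)\big)\ge\frac{1}{4\pi}\big|\tr(BB^*)-\tr(AA^*)\big|=|c_1(p)|$, which is \eqref{bpbou}. As a bonus, equality forces $\tr(AA^*)=0$ or $\tr(BB^*)=0$, i.e.\ by faithfulness $A=0$ or $B=0$, which are precisely the first-order (anti-)self-duality equations $p(\pa p)(1-p)=0$ or $(1-p)(\pa p)p=0$ anticipated after the statement.

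I expect the whole argument to be routine, with the only real work being bookkeeping: correctly identifying the $2\times 2$ ``block'' decomposition induced by $p$ and $1-p$, verifying that all mixed terms vanish, and converting $c_1(p)$ to the complex notation with the right factor $\tfrac{\mi}{2}$ and the correct sign. There is no genuine obstacle provided one is allowed to invoke positivity and faithfulness of $\tr$ on the completed algebra, which is part of the standing hypotheses.
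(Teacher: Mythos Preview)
Your argument is correct and is essentially the paper's own proof, just organized through the block decomposition. The paper applies positivity of $\tr$ directly to the elements $(\pa_1\pm \mi\,\pa_2)(p)\,p$, which in your notation are exactly $A^*$ and $B$; expanding $\tr(AA^*)\ge 0$ and $\tr(B^*B)\ge 0$ in real coordinates is precisely your identity $S[p]=\tfrac{1}{4\pi}(\tr AA^*+\tr BB^*)$ and $c_1(p)=\tfrac{1}{4\pi}(\tr BB^*-\tr AA^*)$, followed by $a+b\ge|a-b|$.
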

\begin{proof} %\noindent Proof.
Due to positivity of the trace integral and its cyclic properties, one has that
\begin{align*}
0 & \leq \tr \big( [ (\pa_1 \pm \mi \, \pa_2)(p) \, p ]^* [ (\pa_1 \pm \mi \, \pa_2)(p) \, p ] \big) \\
& = \tr \big( p [ (\pa_1 p)^2 + (\pa_2 p)^2 ] \big) \pm \mi \, \tr \big( p [ \pa_1 p \, \pa_2 p  - \pa_2 p \, \pa_1 p ] \big) ,
\end{align*}
from which it follows that
\begin{equation*}
\tr \big( p [ (\pa_1 p)^2 + (\pa_2 p)^2 ] \big) \geq \left| \tr \big( p [ \pa_1 p \, \pa_2 p  - \pa_2 p \, \pa_1 p ] \big) \right| ,
\end{equation*}
and  comparing with the definitions \eqref{topch} and \eqref{actfun} one gets the inequality \eqref{bpbou}.
\end{proof}

From the proof above, it is clear that the equality in \eqref{bpbou} occurs when the projection 
$p$ satisfies {\it self-duality} or {\it anti-self duality} equations:
$$ %\label{sd0}
(\pa_1 \pm \mi \, \pa_2)(p) \, p= 0 ,
$$
or equivalently $\, p (\pa_1 \pm \mi \, \pa_2)(p) = 0$. These can also be written respectively as
\beq\label{sd-asd}
\opa(p) \, p =0  , \qquad \pa(p) \, p =0  ,
\eeq
or equivalently $p\, \pa(p) = 0 $, respectively $p\, \opa(p) = 0$.

All of the above can be extended to more general metrics, see \cite{DKL00}. In two dimensions, the conformal class of a general constant metric is parametrized by
a  complex number $\tau \in \IC$, with $\Im \tau > 0$.
Up to a conformal factor, the metric is
given by
$$ %\label{met}
g = (g_{\mu\nu}) =
\left(
\begin{array}{cc}
1 & \Re\tau \\
\Re\tau & |\tau|^2
\end{array}
\right) .
$$
The corresponding `complex torus' $\IT^2$ would act on $\cA$ infinitesimally with two derivations
$$ %\label{comt2act}
\pa = \pa_1 +\bar{\tau} \pa_2 , \qquad
\opa = \pa_1 + \tau \pa_2 .
$$

\section{Equivalence Bimodules}\label{se:eqbi}

The construction of projections in a $C^*$-algebra $A$ via an equivalence bimodule $E$ with a second unital $C^*$-algebra $B$ was a crucial result of Rieffel \cite{Ri81,ri88}.  Here we recall the main results of the construction in the smooth version.

\subsection{Projections from bimodules}\label{prbim}
  
With pre $C^*$-algebras $\cA$ and  
$\cB$, an equivalence $\cA-\cB$-bimodule $\cE$ between them is equipped with a left-linear $\cA$-valued hermitian product on $\cE$, that we denote by $\lhs{\cdot}{\cdot}$, and a right-linear $\cB$-valued hermitian product on $\cE$ denoted by $\rhs{\cdot}{\cdot}$. Thus, $\cE$ is both a left and a right (pre-)Hilbert module. In addition, the hermitian products satisfy 
an associativity condition:  
\beq\label{ascond}
\lhs{\xi}{\eta} \zeta = \xi \rhs{\eta}{\zeta} ,
\eeq
for all $\xi,\eta,\zeta\in \cE$. When there exists such an equivalence bimodule $\cE$ between $\cA$ and $\cB$ one says that the two algebras are Morita equivalent \cite{ri74-2}. 
 Morita equivalence yields an identification 
of the right algebra with the compact endomorphisms of $\cE$, $\cB \simeq \mathrm{End}_{\cA}^0(\cE)$. 
In particular, when the algebra $\cB$ is unital there exist elements $\{ \eta_1,...,\eta_n \}$ in $\cE$ such that 
$$
\sum_j \rhs{\eta_j}{\eta_j} = 1_\cB \,. 
$$
As a consequence, the matrix 
$p = (p_{jk})$ with elements $p_{jk} = \lhs{\eta_j}{\eta_k}$ is a projection in the matrix algebra $M_n(\cA)$:
$$
(p^2)_{jl} = \sum_k \lhs{\eta_j}{\eta_k} \lhs{\eta_k}{\eta_l} =  \sum_k \lhs{\lhs{\eta_j}{\eta_k} \eta_k}{\eta_l}  
= \sum_k \lhs{ \eta_j \rhs{\eta_k}{\eta_k} }{\eta_l} = p_{jl} ,
$$
having used the associativity condition \eqref{ascond}. This establishes the finite left $\cA$-module projectivity of $\cE$ with  the identification $\cE \simeq \cA^n p$.  Furthermore, an use of the condition \eqref{ascond} allows one to reconstruct any element 
$\xi\in\cE$ over the family $\{ \eta_1,...,\eta_n \}$:  
\begin{align}\label{recon}
  \xi &= \xi \, 1_\cB  = \xi \, \sum_j \rhs{\eta_j}{\eta_j}  \nonumber \\
       &=\lhs{\xi}{\eta_1} \eta_1+\cdots+\lhs{\xi}{\eta_n} \eta_n .
\end{align}
By results of \cite{frla02}, this is rephrased as the existence of a {\it Parseval standard module frame} $\{\eta_1,...,\eta_n\}$ for $\cE$. 
In general, a standard module frame for $\cE$ is a set $\{\eta_1,...,\eta_n\}$ such that
\beq\label{frame}
  c_1\,\lhs{\xi}{\xi}\le\sum_{j} \lhs{\xi}{\eta_i}\lhs{\eta_i}{\xi}\le c_2\,\lhs{\xi}{\xi} \,, \qquad  \textup{for all} \quad \xi\in\cE ,
\eeq
for positive constants $c_1$ and $c_2$. The frame $\{\eta_1,...,\eta_n\}$ %for $\cE$ 
is said to be tight if $c_1=c_2$, and to be normalized or a Parseval frame if $c_1=c_2=1$.
Thus, if $\cA$ and $\cB$ are Morita equivalent, there exists a Parseval standard module frame for the $\cA-\cB$ equivalence bimodule $\cE$. However, it will be useful for us to consider (and start with) more general standard module frames.  

The module $\cE$ is self-dual for the $\cA$-valued hermitian structure \cite[Prop.~7.3]{ri10-2}, in the sense that for any 
$\varphi \in {}_{\cA}\mathrm{Hom}({}_{\cA}\cE,{}_{\cA}\cA)$ there exists a unique $\zeta_\varphi\in \cE$ such that 
$\varphi(\xi)=\lhs{\xi}{\zeta_\varphi}$, for all $\xi\in\cE$. Indeed,
$$
\varphi(\xi) = \varphi \Big( \sum_j \lhs{\xi}{\eta_j} \eta_j \Big) = \sum_j \lhs{\xi}{\eta_j} \varphi(\eta_j ) = 
\lhs{\xi}{ \sum_j (\varphi(\eta_j))^* \eta_j}  ,
$$
and $\zeta_\varphi = \sum_j (\varphi(\eta_j))^* \eta_j$ is the element of $\cE$ representing $\varphi$. 
Thus every element of ${}_{\cA}\mathrm{Hom}({}_{\cA}\cE,{}_{\cA}\cA)$ can be written as $\varphi_\xi$ for some $\xi\in\cE$.
By \cite[Prop.~7.3]{ri10-2}, starting with a left $\cA$-module $\cE$ which is self-dual, any projection $p$ such that $\cE \simeq \cA^n p$ is of the form $p_{jk} = \lhs{\eta_j}{\eta_k}$ for some Parseval standard module frame $\{\eta_1,...,\eta_n\}$ for $\cE$. 

There is in fact more structure. Firstly, from the above discussion, one has a linear space identification ${}_{\cA}\mathrm{Hom}({}_{\cA}\cE,{}_{\cA}\cA)\simeq \cE$:
for $\xi\in \cE$, the corresponding $\varphi_\xi \in {}_{\cA}\mathrm{Hom}({}_{\cA}\cE,{}_{\cA}\cA)$ is given by $\varphi_\xi(\eta) = \lhs{\eta}{\xi}$, for any $\eta\in\cE$.  
 Next, a {\it right} action of $a\in\cA$ is defined as 
$$ %\label{dra}
\varphi_\xi \cdot a = R_a \circ  \varphi_\xi = \varphi_{a^* \xi}, \quad \textup{that is} \quad (\varphi_\xi \cdot a )(\psi) := \lhs{\psi}{\xi} a ,
$$ 
while a {\it left} action of $b\in\cB$ is defined by 
$$ %\label{dla}
b \cdot \varphi_\xi = \varphi_\xi \circ R_b , \quad \textup{that is} \quad ( b \cdot \varphi_\xi )(\psi) 
:= \lhs{\psi b}{\xi} . 
$$
These allow one to reconstruct any element $\varphi_\xi$ over the frame $\{ \eta_1,...,\eta_n \}$. From \eqref{recon}:
\begin{align*}
 \varphi_\xi (\psi) & = ( \varphi_{ \sum_j \lhs{\xi}{\eta_j} \eta_j} )(\psi)   
 = \sum_j \lhs{\psi}{ \lhs{\xi}{\eta_j} \eta_j } 
 \\ 
 & = \sum_j \lhs{\psi}{\eta_j} \lhs{\eta_j }{\xi} = \sum_j \varphi_{\eta_j}(\psi) \lhs{\eta_j }{\xi} 
 = \sum_j ( \varphi_{\eta_j} \cdot \lhs{\eta_j }{\xi}) (\psi) ,
\end{align*}
having used the properties of the left hermitian structure. Equivalently, one can compute:
\begin{align*}
 \varphi_\xi (\psi) & = ( 1_\cB \varphi_\xi )(\psi) = \sum_j \big( \rhs{\eta_j}{\eta_j} \varphi_\xi \big) (\psi) 
 = \sum_j \lhs{ \psi \rhs{\eta_j}{\eta_j} }{\xi} = \sum_j \lhs{ \lhs{\psi}{\eta_j} \eta_j }{\xi} 
 \\ 
 & = \sum_j \lhs{\psi}{\eta_j} \lhs{\eta_j }{\xi} = \sum_j \varphi_{\eta_j}(\psi) \lhs{\eta_j }{\xi} 
 = \sum_j ( \varphi_{\eta_j} \cdot \lhs{\eta_j }{\xi}) (\psi) ,
\end{align*}
having used the associativity condition \eqref{ascond}, and arriving at the same result. The expression above is a reconstruction formula which is dual to the one in \eqref{recon}, that is,
$$
 \varphi_\xi = \varphi_{\eta_1} \cdot \lhs{\eta_1 }{\xi} +\cdots+ \varphi_{\eta_n} \cdot \lhs{\eta_n }{\xi} .
$$

\begin{remark}
\textup{
In more generality one could start with dual frames $\{ \eta_j \}$  and $\{ \zeta_j \}$ in $\cE$ for which
$\sum_j \rhs{\eta_j}{\zeta_j} = 1_\cB$. The matrix $e = (e_{jk})$ in $M_n(\cA)$ with elements $e_{jk} = \lhs{\zeta_j}{\eta_k}$ is now 
only an idempotent $e^2=e$, and any element $\xi\in\cE$ can be reconstructed as
$$ %\label{reconbis}
  \xi 
  =\lhs{\xi}{\eta_1} \zeta_1+\cdots+\lhs{\xi}{\eta_n} \zeta_n ,
$$ 
while elements $\varphi_\xi \in {}_{\cA}\mathrm{Hom}({}_{\cA}\cE,{}_{\cA}\cA)$ are reconstructed as 
$$ %\label{dreconbis}
  \varphi_\xi 
  = \varphi_{\eta_1} \cdot \lhs{\zeta_1 }{\xi} +\cdots+ \varphi_{\eta_n} \cdot \lhs{\zeta_n }{\xi} .
$$
}
\end{remark}
In this paper we address the case when the $\cA-\cB$ equivalence bimodule $\cE$ has one generator $\eta$, while 
postponing to a future paper the general situation of an arbitrary finitely generated equivalence bimodule $\cE$. 
Hence, $\eta$ is a Parseval frame for $\cE$ if it holds that 
\begin{equation*}  
  \xi=\lhs {\xi}{\eta}\eta=\xi\rhs{\eta}{\eta}
\end{equation*}
for all $\xi\in\cE$, 
%, i.e. $\lhs{g}{g}$ is invertible in $\cA$. 
that is $\eta$ is a Parseval frame for $\cE$ if and only if $\rhs{\eta}{\eta}=1_\cB$.

If one starts with a standard module frame $\eta$, the element $\rhs{\eta}{\eta}$ is invertible and positive \cite{frla02} and one gets a Parseval frame $\tilde{\eta}$ by considering the element $\tilde{\eta}:=\eta(\rhs{\eta}{\eta})^{-1/2}$.

We finish the section recalling a slightly more general result \cite[Prop.~2.8]{Ri81}:

\begin{lemma}\label{Rieffel}
Suppose $\cE$ is an equivalence $\cA-\cB$-bimodule and let $\psi\in \cE$.
Then $p=\lhs{\psi}{\psi}$ is a projection in $\cA$ if and only if
$$
\psi\rhs{\psi}{\psi}=\psi .
$$
In particular, any Parseval frame $\psi\in \cE$, $\rhs{\psi}{\psi}=1_\cB$, yields a projection $p=\lhs{\psi}{\psi}  \in \cA$. 
\end{lemma}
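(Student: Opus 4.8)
The plan is to prove the equivalence directly from the associativity condition \eqref{ascond} and the properties of the two hermitian products, without invoking frames explicitly.

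\medskip

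First I would prove the implication that if $\psi \rhs{\psi}{\psi} = \psi$ then $p = \lhs{\psi}{\psi}$ is a projection. Self-adjointness is immediate: $p^* = \lhs{\psi}{\psi}^* = \lhs{\psi}{\psi} = p$ by the symmetry property of the left hermitian product. For idempotency, I would compute
$$
p^2 = \lhs{\psi}{\psi}\lhs{\psi}{\psi} = \lhs{\lhs{\psi}{\psi}\psi}{\psi} = \lhs{\psi \rhs{\psi}{\psi}}{\psi} = \lhs{\psi}{\psi} = p ,
$$
where the second equality uses $\cA$-left-linearity of $\lhs{\cdot}{\cdot}$ in its first slot, the third uses the associativity condition \eqref{ascond}, and the fourth uses the hypothesis $\psi \rhs{\psi}{\psi} = \psi$. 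This direction is short.

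\medskip

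For the converse, suppose $p = \lhs{\psi}{\psi}$ is a projection. The natural route is to look at the element $\zeta := \psi \rhs{\psi}{\psi} - \psi \in \cE$ and show $\zeta = 0$ by showing $\rhs{\zeta}{\zeta} = 0$ (faithfulness/positive-definiteness of the right inner product then forces $\zeta = 0$). Expanding $\rhs{\zeta}{\zeta}$ and using $\cB$-linearity of $\rhs{\cdot}{\cdot}$ together with the identity $\rhs{\xi b}{\eta} = b^* \rhs{\xi}{\eta}$ and $\rhs{\xi}{\eta b} = \rhs{\xi}{\eta} b$, one gets
$$
\rhs{\zeta}{\zeta} = \rhs{\psi}{\psi}\rhs{\psi}{\psi}\rhs{\psi}{\psi} - \rhs{\psi}{\psi}\rhs{\psi}{\psi} - \rhs{\psi}{\psi}\rhs{\psi}{\psi} + \rhs{\psi}{\psi} .
$$
So it suffices to show that $q := \rhs{\psi}{\psi} \in \cB$ satisfies $q^3 - 2q^2 + q = q(q-1)^2 = 0$. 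The key step translating the hypothesis on $\cA$ into information about $q$ is the mixed-associativity computation: using \eqref{ascond} one shows $\rhs{\psi}{\psi}\rhs{\psi}{\psi} = \rhs{\psi \rhs{\psi}{\psi}}{\psi} = \rhs{\lhs{\psi}{\psi}\psi}{\psi} = \rhs{p\,\psi}{\psi}$, and iterating, $q^n$ is expressed through powers of $p$ acting on $\psi$; since $p^2 = p$, any such expression collapses so that $q^2 = \rhs{p\psi}{\psi}$ and $q^3 = \rhs{p\psi}{p\psi} = \rhs{p^2\psi}{\psi} = \rhs{p\psi}{\psi} = q^2$. Hence $q^3 = q^2$. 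Then $q(q-1)^2 = q^3 - 2q^2 + q = q^2 - 2q^2 + q = q - q^2$; to finish one also needs $q^2 = q$, which follows the same way: $q^2 = \rhs{p\psi}{\psi}$ while $q = \rhs{\psi}{\psi}$, and the difference is controlled by $(1-p)\psi$, whose right-norm-square is $\rhs{(1-p)\psi}{(1-p)\psi} = \rhs{\psi}{\psi} - \rhs{p\psi}{\psi} - \rhs{\psi}{p\psi} + \rhs{p\psi}{p\psi} = q - q^2 - q^2 + q^2 = q - q^2$; but this must also equal $q - q^2$ consistently, so the honest argument is to observe $\rhs{(1-p)\psi}{(1-p)\psi}\ge 0$ and $= q-q^2$, and separately from $p$ self-adjoint idempotent that $\lhs{(1-p)\psi}{(1-p)\psi} = (1-p)\lhs{\psi}{\psi}(1-p) = (1-p)p(1-p) = 0$, whence $(1-p)\psi = 0$ by definiteness of the \emph{left} inner product, i.e. $p\psi = \psi$; feeding this back gives $\psi\rhs{\psi}{\psi} = \lhs{\psi}{\psi}\psi = p\psi = \psi$ via \eqref{ascond}.

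\medskip

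So the cleanest structure is: (i) self-adjointness and idempotency of $p$ from $\psi\rhs{\psi}{\psi}=\psi$ for the forward direction; (ii) for the converse, from $p^2 = p = p^*$ deduce $\lhs{(1-p)\psi}{(1-p)\psi} = (1-p)p(1-p) = 0$, then use definiteness of the left $\cA$-valued product to get $(1-p)\psi = 0$, and finally rewrite $\psi = p\psi = \lhs{\psi}{\psi}\psi = \psi\rhs{\psi}{\psi}$ using \eqref{ascond}. The last sentence of the Lemma is then the special case $\rhs{\psi}{\psi} = 1_\cB$, for which $\psi\rhs{\psi}{\psi} = \psi$ holds trivially. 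The main obstacle is bookkeeping: making sure one uses definiteness of the correct hermitian product (the left one, valued in $\cA$, to conclude $(1-p)\psi = 0$) rather than circularly invoking the right one, and being careful that in the smooth pre-$C^*$ setting the relevant positivity/definiteness statements are available; the inequality $\lhs{\xi}{\xi}\ge 0$ with equality iff $\xi = 0$ is part of the axioms of a (pre-)Hilbert module, so this is safe.
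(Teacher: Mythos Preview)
Your proposal is correct, and the clean version you land on at the end is exactly the paper's argument: the paper computes $\lhs{\psi\rhs{\psi}{\psi}-\psi}{\psi\rhs{\psi}{\psi}-\psi}=0$ directly, and since associativity gives $\psi\rhs{\psi}{\psi}=\lhs{\psi}{\psi}\psi=p\psi$, this is literally your computation $\lhs{(1-p)\psi}{(1-p)\psi}=(1-p)p(1-p)=0$. Your initial detour through the \emph{right} inner product $\rhs{\zeta}{\zeta}$ and the algebra of $q=\rhs{\psi}{\psi}$ is unnecessary and, as you noticed yourself, does not close without reverting to the left product; the paper goes straight to the left inner product from the outset.
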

\begin{proof}
Suppose $\psi\rhs{\psi}{\psi}=\psi$. The associativity condition \eqref{ascond}
and left-linearity yield:
\begin{eqnarray*}
\lhs{\psi}{\psi} \lhs{\psi}{\psi} =  \lhs{\lhs{\psi}{\psi} \psi}{\psi} = \lhs{\psi \rhs{\psi}{\psi} }{\psi} = \lhs{\psi}{\psi}  ,
\end{eqnarray*}
that is $\lhs{\psi}{\psi}$ is a projection. Conversely, suppose $\lhs{\psi}{\psi}$ is a projection in $\cA$. Then a simple calculation using once more the associativity condition \eqref{ascond} implies that
\begin{equation*}
\lhs{\psi \rhs{\psi}{\psi} - \psi} {\psi \rhs{\psi}{\psi} - \psi}=0 ,
\end{equation*}
and thus one obtains $\psi \rhs{\psi}{\psi} = \psi$.   
\end{proof}
 
\subsection{Derivations and connections}
We assume next that both algebras $\cA$ and $\cB$ are contained in the joint smooth domain of
two commuting derivations $\pa_1$ and $\pa_2$ (denoted by the same symbols on the two algebras). 
Moreover $\cA$ and $\cB$ are endowed with faithful tracial states, denoted by the same symbol $\tr$. 
We also  require that the traces are invariant, 
that is $\tr (\pa_j(a))=0$, for all $a\in \cA$ and $\tr (\pa_j(b))=0$, for all $b\in \cB$. 
In addition, we assume the traces to be compatible in the sense that
$$
 \tr \; \lhs{\xi}{\eta} =  \tr \rhs{\eta}{\xi} \quad \text{for all} \quad \xi, \eta \in \cE .
$$

Finally, we assume that the derivations can be consistently lifted to the bimodule $\cE$ as two covariant
derivatives $\nabla_1$ and $\nabla_2$. That is, there are linear maps
\beq
\nabla_j : \cE \to \cE , \quad j=1,2 ,
\eeq
which satisfy a left and right Leibniz rule:
for all $\xi\in \cE$, $a\in \cA$ and $b\in \cB$ it holds that
$$ %\label{lei}
\nabla_j (a\, \xi) = (\pa_j a) \, \xi\  + a\, (\nabla_j \xi) , \quad \textup{and} \quad 
%\label{lui} 
\nabla_j (\xi\, b) = (\nabla_j \xi)\, b +  \xi (\pa_j b) .
$$ 
The covariant derivatives are taken to be compatible with both the
$\cA$-valued  hermitian structure $\lhs{\cdot}{\cdot}$ and the
$\cB$-valued  hermitian structure $\rhs{\cdot}{\cdot}$, that is for all $\xi, \eta\in \cE$, both 
\beq\label{lcomcov}
\pa_j(\lhs{\xi}{\eta}) = \lhs{\nabla_j \xi}{\eta} + \lhs{\xi}{\nabla_j \eta}
\eeq
and
\beq\label{rcomcov}
\pa_j(\rhs{\xi}{\eta}) = \rhs{\nabla_j \xi}{\eta} + \rhs{\xi}{\nabla_j \eta},
\eeq
hold. The curvature of the covariant derivatives is defined as
$$
F_{12} := \nabla_1\,\nabla_2\ - \nabla_2\nabla_1
$$
and it is easily seen to be both left $\cA$-linear and right $\cB$-linear. 

Recall the topological charge $c_1(p)$ defined in \eqref{topch} for any projection. 
\begin{proposition}\label{tcF}
Let $\psi \in \cE$ be such that $\rhs{\psi}{\psi}=1_{\cB}$ and
$p_\psi := \lhs{\psi}{\psi}\in\cA$ the corresponding projection. Then, for its topological charge one finds
$$
c_1(p_\psi) = - \frac{1}{2\pi \mi} \, \tr \rhs{\psi}{F_{12}\psi}  . 
$$
\end{proposition}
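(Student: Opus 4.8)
The plan is to express the topological charge $c_1(p_\psi)$ directly in terms of the generator $\psi$ and then to push all derivatives onto the $\cB$-valued inner product using the compatibility relations \eqref{lcomcov} and \eqref{rcomcov}, the Leibniz rules, and the trace compatibility $\tr\,\lhs{\xi}{\eta}=\tr\,\rhs{\eta}{\xi}$. First I would record the basic facts: since $\rhs{\psi}{\psi}=1_\cB$, Lemma~\ref{Rieffel} (or the discussion preceding it) gives $\psi\rhs{\psi}{\psi}=\psi$, and by the associativity condition \eqref{ascond} this reads $p_\psi\,\psi=\lhs{\psi}{\psi}\psi=\psi\rhs{\psi}{\psi}=\psi$. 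Also, differentiating $\rhs{\psi}{\psi}=1_\cB$ with \eqref{rcomcov} gives $\rhs{\nabla_j\psi}{\psi}+\rhs{\psi}{\nabla_j\psi}=0$. These two identities — $p_\psi\psi=\psi$ and the skew-adjointness of $\nabla_j$ against $\psi$ — are the computational workhorses.

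Next I would rewrite $\pa_j p_\psi=\pa_j\lhs{\psi}{\psi}=\lhs{\nabla_j\psi}{\psi}+\lhs{\psi}{\nabla_j\psi}$ using \eqref{lcomcov}, substitute this into the definition \eqref{topch}, namely
$$
c_1(p_\psi)=\frac{1}{2\pi\mi}\,\tr\big(p_\psi(\pa_1 p_\psi\,\pa_2 p_\psi-\pa_2 p_\psi\,\pa_1 p_\psi)\big),
$$
and expand. Using $p_\psi\psi=\psi$ repeatedly, together with the associativity condition to move $\cA$-valued brackets through $\psi$ and convert products like $\lhs{\xi}{\psi}\lhs{\psi}{\eta}$ into $\lhs{\xi\rhs{\psi}{\psi}}{\eta}=\lhs{\xi}{\eta}$, the many terms in the expansion should collapse. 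The cyclicity of the trace plus the trace-compatibility $\tr\,\lhs{\xi}{\eta}=\tr\,\rhs{\eta}{\xi}$ then lets me transfer everything to a single $\cB$-valued expression; the surviving combination should be recognizable as $\rhs{\psi}{(\nabla_1\nabla_2-\nabla_2\nabla_1)\psi}=\rhs{\psi}{F_{12}\psi}$ after once more using \eqref{rcomcov} and the skew-adjointness relation to integrate by parts (the tracial integral kills the total-derivative terms $\tr\,\pa_j(\text{something})$). Matching signs and the factor $-1/(2\pi\mi)$ finishes the computation.

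The main obstacle is bookkeeping: the expansion of $p_\psi\pa_1p_\psi\pa_2p_\psi$ produces eight terms (each $\pa_j p_\psi$ is a sum of two brackets), and for each one must decide whether to simplify a bracket-product via \eqref{ascond}, whether to push a $\nabla_j$ off $\psi$ using $\rhs{\nabla_j\psi}{\psi}=-\rhs{\psi}{\nabla_j\psi}$, and whether a term is a total derivative that drops under the trace. The care needed is to track which $\psi$'s are "absorbed" (via $p_\psi\psi=\psi$) and which $\nabla_j$'s act on which $\psi$, so that the antisymmetrization in $1\leftrightarrow 2$ leaves exactly $\nabla_1\nabla_2\psi-\nabla_2\nabla_1\psi$ paired against $\psi$ in the $\rhs{\cdot}{\cdot}$ slot, with all genuinely "first-order in $\nabla$" cross-terms either cancelling against each other by antisymmetry or vanishing as traces of derivations. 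A useful shortcut to reduce the bookkeeping is to first note that on the subspace $p_\psi\cE$ one may replace $\pa_j p_\psi$ acting between $\psi$'s by an effective "covariant" object, or equivalently to write $c_1(p_\psi)$ in the complex form $\frac{1}{2\pi\mi}\tr(p_\psi[\pa p_\psi,\opa p_\psi]\cdot\text{const})$ and exploit $\opa(p_\psi)p_\psi$-type identities; but the direct real-coordinate expansion, done carefully, is sufficient.
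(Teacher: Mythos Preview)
Your proposal is correct and follows essentially the same route as the paper: expand $\pa_j p_\psi$ via \eqref{lcomcov}, simplify the resulting products using the associativity \eqref{ascond} together with $\rhs{\psi}{\psi}=1_\cB$, pass to the $\cB$-side via the trace compatibility, and then use \eqref{rcomcov} plus invariance of the trace to integrate by parts and obtain $-\tr\rhs{\psi}{F_{12}\psi}$. The only organizational difference is that the paper first simplifies the full operator $p_\psi(\pa_1 p_\psi\,\pa_2 p_\psi - \pa_2 p_\psi\,\pa_1 p_\psi)$ as an element of $\cA$, obtaining the closed form
\[
\lhs{\psi\big(\rhs{\nabla_1\psi}{\nabla_2\psi}-\rhs{\nabla_2\psi}{\nabla_1\psi}\big)}{\psi}
+\lhs{\psi\big(\rhs{\psi}{\nabla_1\psi}\rhs{\psi}{\nabla_2\psi}-\rhs{\psi}{\nabla_2\psi}\rhs{\psi}{\nabla_1\psi}\big)}{\psi},
\]
before taking the trace; the second summand is then a commutator in $\cB$ and drops out by cyclicity, which neatly bypasses the term-by-term bookkeeping you flag as the main obstacle.
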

\begin{proof}
With a projection $p_\psi := \lhs{\psi}{\psi}$ for $\psi \in \cE$ such that $\rhs{\psi}{\psi}=1_{\cB}$, a direct computation, using the 
compatibility \eqref{lcomcov} and the associativity condition \eqref{ascond}, leads to
\begin{multline*}
p_\psi(\pa_1 p_\psi \pa_2 p_\psi -\pa_2 p_\psi \pa_1 p_\psi) = 
\lhs{\psi\,\left(\rhs{\nabla_1\psi}{\nabla_2\psi} - \rhs{\nabla_2\psi}{\nabla_1\psi}\right)}{\psi} \\
+ \lhs{ \psi \,\left(\rhs{\psi}{\nabla_1\psi}\rhs{\psi}{\nabla_2\psi} - \rhs{\psi}{\nabla_2\psi}\rhs{\psi}{\nabla_1\psi} \right)}{\psi} .
\end{multline*}
Next, the compatibility of traces and the tracial property, with $\rhs{\psi}{\psi}=1_{\cB}$, yield
$$
\tr\big( (p_\psi(\pa_1 p_\psi \pa_2 p_\psi -\pa_2 p_\psi \pa_1 p_\psi) \big)  = \tr \big( \rhs{\nabla_1\psi}{\nabla_2\psi}
- \rhs{\nabla_2\psi}{\nabla_1\psi}\big).
$$
In turn, the compatibility \eqref{rcomcov} and the invariance of the trace leads to 
$$
\tr\big( (p_\psi(\pa_1 p_\psi \pa_2 p_\psi -\pa_2 p_\psi \pa_1 p_\psi) \big) = - \tr \rhs{\psi}{F_{12}\psi}
$$
which completes the proof.
\end{proof}
\begin{corollary}\label{tcconst}
Let the curvature be constant and equal to $F_{12} = -2\pi \mi \,q\, \mathrm{id_\cE}$.
Then, for any $\psi \in \cE$ such that $\rhs{\psi}{\psi}=1_{\cB}$, 
the projection $p_{\psi}=\lhs{\psi}{\psi}$ has topological charge
$$
c_1(p) = q\, \tr(1_B) .
$$
\end{corollary}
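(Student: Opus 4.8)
The plan is to obtain the statement as an immediate specialization of Proposition \ref{tcF}, so essentially no new work is required beyond a one-line substitution. First I would recall that Proposition \ref{tcF} already gives, for any $\psi\in\cE$ with $\rhs{\psi}{\psi}=1_{\cB}$ and $p_\psi=\lhs{\psi}{\psi}$,
$$
c_1(p_\psi) = -\frac{1}{2\pi\mi}\,\tr\rhs{\psi}{F_{12}\psi}.
$$
Then I would insert the hypothesis $F_{12}=-2\pi\mi\,q\,\mathrm{id}_\cE$, so that $F_{12}\psi=-2\pi\mi\,q\,\psi$ as an element of $\cE$.

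Next I would use that the $\cB$-valued hermitian product is $\IC$-sesquilinear in the module variables, hence $\rhs{\psi}{(-2\pi\mi\,q)\psi}=(-2\pi\mi\,q)\,\rhs{\psi}{\psi}=(-2\pi\mi\,q)\,1_{\cB}$, and that $\tr$ is linear, so $\tr\rhs{\psi}{F_{12}\psi}=(-2\pi\mi\,q)\,\tr(1_{\cB})$. Substituting back into the displayed identity, the normalization factors cancel and one is left with
$$
c_1(p_\psi)=-\frac{1}{2\pi\mi}\,\bigl(-2\pi\mi\,q\bigr)\,\tr(1_{\cB})=q\,\tr(1_{\cB}),
$$
which is exactly the claim.

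I do not expect a genuine obstacle here: the result is a direct corollary of Proposition \ref{tcF}, and the $2\pi\mi$ appearing in the constant curvature is precisely chosen to cancel the $1/(2\pi\mi)$ there. The only points worth a short comment are (i) that $q$ is necessarily real, which one sees by applying the compatibility \eqref{rcomcov} twice together with $[\pa_1,\pa_2]=0$ to deduce that $F_{12}$ is skew-adjoint for the $\cB$-valued hermitian structure, forcing a scalar curvature $F_{12}=\lambda\,\mathrm{id}_\cE$ to have $\lambda$ purely imaginary; and (ii) that $\lambda\,\mathrm{id}_\cE$ being central lets us pull the scalar freely through both the hermitian product and the trace. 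With these remarks the computation above is complete and self-contained.
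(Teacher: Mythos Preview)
Your proposal is correct and matches the paper's approach: the corollary is stated in the paper without proof precisely because it is an immediate substitution into Proposition~\ref{tcF}, exactly as you carry out. Your additional remarks on the reality of $q$ and centrality of the scalar are fine side comments but are not needed for the argument itself.
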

 
\subsection{Soliton solutions}

We look for solutions of the self-duality equations \eqref{sd-asd} of the form
$p_\psi := \lhs{\psi}{\psi}\in\cA$ with $\psi \in \cE$ such that $\rhs{\psi}{\psi}=1_{\cB}$ as in Lemma~\eqref{Rieffel}.
We shall need the holomorphic, respectively anti-holomorphic, connection on $\cE$,
$$
\nabla = \nabla_1 -\mi \,  \nabla_2, \qquad
\onabla =\nabla_1 +\mi \,  \nabla_2 ,
$$
which lift to $\cE$ the corresponding complex derivative $\pa = \pa_1 - \mi \, \pa_2$ or $\opa = \pa_1 + \mi \, \pa_2$. 

\begin{proposition}\label{main0}
Let $\psi \in \cE$ be such that $\rhs{\psi}{\psi}=1_{\cB}$ with
$p_\psi := \lhs{\psi}{\psi}\in\cA$ the corresponding projection.
Let $\onabla$ be the anti-holomorphic connection on $\cE$.
Then, the projection $p_\psi$ is a solution of the self-duality equations of \eqref{sd-asd}, 
\beq\label{sdbis0}
\overline{\pa} (p_\psi) ~p_\psi = 0,
\eeq
if and only if the $\psi$ is a generalized eigenvector of $\onabla$, i.e. there exists $\lambda\in \cB$ such that 
\beq\label{sdah0}
\onabla\psi = \psi\lambda\ .
\eeq
\end{proposition}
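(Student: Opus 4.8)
The plan is to unwind the self-duality equation $\opa(p_\psi)\, p_\psi = 0$ directly in terms of the connection $\onabla$, using the formula $p_\psi = \lhs{\psi}{\psi}$ together with the compatibility condition \eqref{lcomcov} for the derivations lifted to $\cE$. First I would compute $\opa(p_\psi) = \opa \lhs{\psi}{\psi}$. By \eqref{lcomcov} applied to both $\pa_1$ and $\pa_2$ and assembling the complex combination $\opa = \pa_1 + \mi\,\pa_2$, one obtains
\begin{equation*}
\opa \lhs{\psi}{\psi} = \lhs{\onabla \psi}{\psi} + \lhs{\psi}{\nabla \psi},
\end{equation*}
where the mismatch between $\onabla$ and $\nabla$ on the two slots comes from the conjugate-linearity of the $\cA$-valued hermitian product in its second argument (so $\mi$ in the lift becomes $-\mi$ there). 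Then I would multiply on the right by $p_\psi = \lhs{\psi}{\psi}$ and use the associativity condition \eqref{ascond} to move the inner product across: the term $\lhs{\psi}{\nabla\psi}\,\lhs{\psi}{\psi} = \lhs{\psi \rhs{\psi}{\nabla\psi}}{\psi}$ — wait, one must be careful with which hermitian product appears; the clean route is to write $\lhs{\alpha}{\psi}\lhs{\psi}{\psi} = \lhs{\lhs{\alpha}{\psi}\psi}{\psi} = \lhs{\alpha \rhs{\psi}{\psi}}{\psi} = \lhs{\alpha}{\psi}$ using $\rhs{\psi}{\psi}=1_\cB$, valid for any $\alpha \in \cE$. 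Applying this to each of the two terms gives
\begin{equation*}
\opa(p_\psi)\, p_\psi = \lhs{\onabla \psi}{\psi} + \lhs{\psi}{\nabla\psi} ,
\end{equation*}
so a priori multiplying by $p_\psi$ changes nothing here; the real simplification must instead come from a different grouping.

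A cleaner approach: I would instead start from the equivalent form $p_\psi\,\opa(p_\psi) = 0$ given in \eqref{sd-asd}, or better, differentiate the idempotent relation $p_\psi^2 = p_\psi$. Actually the most efficient path is to exploit that $p_\psi\,\opa(p_\psi)\,p_\psi = 0$ automatically (since $\opa(p)$ anticommutes with $p$ up to the $p\,\opa(p)\,p$ term which vanishes by differentiating $p^2=p$ and sandwiching), so the self-duality equation $\opa(p_\psi)\,p_\psi=0$ is equivalent to $(1-p_\psi)\,\opa(p_\psi)\,p_\psi = 0$. Then I compute $(1-p_\psi)\opa(p_\psi)p_\psi$ in terms of $\psi$: the projection $p_\psi$ acts on the left-module generated by $\psi$, and one finds $(1-p_\psi)$ kills everything proportional to $\psi$ in the left-module sense. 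Concretely, for any $\alpha\in\cE$ one has $\lhs{\alpha}{\psi} - p_\psi\lhs{\alpha}{\psi} = \lhs{\alpha}{\psi} - \lhs{\psi}{\psi}\lhs{\alpha}{\psi}$; this is where I expect to need to pass through the bimodule more carefully, writing $\opa(p_\psi)p_\psi = \lhs{\onabla\psi}{\psi} + \lhs{\psi}{\nabla\psi}$ and recognizing the second term, after using $\rhs{\psi}{\psi}=1_\cB$ and \eqref{ascond}, as $\lhs{\psi \rhs{\psi}{\nabla\psi}}{\psi}$... hmm, the pairing of slots needs one more associativity move: $\lhs{\psi}{\nabla\psi} = \lhs{\psi}{\nabla\psi}\rhs{\psi}{\psi} = \lhs{\psi}{\psi\rhs{\nabla\psi}{\psi}}$, using that $\lhs{\cdot}{\cdot}$ is conjugate-linear in the second slot and \eqref{ascond} in the form $\eta\rhs{\zeta}{\omega}$... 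The key algebraic identity I will isolate and verify is
\begin{equation*}
\lhs{\psi}{\nabla\psi} = \lhs{\psi}{\psi}\,\lhs{(\onabla\psi)}{\psi}^{\,*}\ \text{-type relation},
\end{equation*}
i.e. that the second term equals $-p_\psi\,\opa(p_\psi)$ up to the piece we dropped, so that $\opa(p_\psi)p_\psi = (1-p_\psi)\lhs{\onabla\psi}{\psi}$.

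Granting that reduction, the equivalence becomes transparent. If $\onabla\psi = \psi\lambda$ for some $\lambda\in\cB$, then $\lhs{\onabla\psi}{\psi} = \lhs{\psi\lambda}{\psi} = \lhs{\psi \rhs{\psi}{\psi}\lambda}{\psi}$... rather, $\lhs{\psi\lambda}{\psi}$: use \eqref{ascond} backwards, $\psi\lambda = \psi\rhs{\psi}{\psi}\lambda$, but more usefully $\lhs{\psi\lambda}{\psi}$ — here $\lambda\in\cB$ acts on the right, and since the left-inner product of $\psi\lambda$ with $\psi$ is $p_\psi$ times something that lies in $\cA$, one checks $\lhs{\psi\lambda}{\psi} = \lhs{\psi}{\psi}\,\mu$ for $\mu\in\cA$ (because $\psi\lambda \in \psi\cB \subseteq \cA\psi$ in the appropriate sense, i.e.\ $\psi\lambda = \psi\rhs{\psi}{\psi}\lambda = \lhs{\psi}{\psi\lambda^*}\psi$), hence $(1-p_\psi)\lhs{\onabla\psi}{\psi} = (1-p_\psi)p_\psi\mu = 0$. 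Conversely, if $\opa(p_\psi)p_\psi = 0$, then $(1-p_\psi)\lhs{\onabla\psi}{\psi} = 0$, which means $\lhs{\onabla\psi}{\psi} = p_\psi\lhs{\onabla\psi}{\psi} = \lhs{\psi}{\psi}\lhs{\onabla\psi}{\psi} = \lhs{\psi\rhs{\psi}{\onabla\psi}}{\psi}$, so $\lhs{\onabla\psi - \psi\rhs{\psi}{\onabla\psi}}{\psi} = 0$; setting $\lambda := \rhs{\psi}{\onabla\psi}\in\cB$ and using faithfulness/nondegeneracy of the hermitian structure together with the reconstruction formula $\onabla\psi - \psi\lambda = \lhs{\onabla\psi - \psi\lambda}{\psi}\psi$ (Parseval frame, \eqref{recon} with a single generator), one concludes $\onabla\psi = \psi\lambda$. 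The main obstacle is purely bookkeeping: getting the conjugate-linear slot of $\lhs{\cdot}{\cdot}$ and the resulting sign on $\mi$ right in $\opa$, and performing the associativity moves \eqref{ascond} in exactly the right order so that the $(1-p_\psi)$ factor lands where it kills the relevant term. No deep input is needed beyond Lemma~\ref{Rieffel}, the compatibility \eqref{lcomcov}, and nondegeneracy of the module frame.
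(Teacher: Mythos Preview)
Your initial computation contains a slip that sends you on a detour. The identity $\lhs{\alpha}{\psi}\lhs{\psi}{\psi}=\lhs{\alpha}{\psi}$ is fine for the first term $\lhs{\onabla\psi}{\psi}$, but the second term is $\lhs{\psi}{\nabla\psi}$, which is \emph{not} of the form $\lhs{\alpha}{\psi}$. The correct move there is
\[
\lhs{\psi}{\nabla\psi}\,\lhs{\psi}{\psi}
=\lhs{\lhs{\psi}{\nabla\psi}\psi}{\psi}
=\lhs{\psi\rhs{\nabla\psi}{\psi}}{\psi},
\]
so right multiplication by $p_\psi$ \emph{does} change the second term. This is exactly the paper's route: one obtains
$\opa(p_\psi)p_\psi=\lhs{\onabla\psi+\psi\rhs{\nabla\psi}{\psi}}{\psi}$, and then the right compatibility \eqref{rcomcov} applied to $\rhs{\psi}{\psi}=1_\cB$ gives $\rhs{\nabla\psi}{\psi}=-\rhs{\psi}{\onabla\psi}$, yielding
\[
\opa(p_\psi)p_\psi=\lhs{\onabla\psi-\psi\rhs{\psi}{\onabla\psi}}{\psi}.
\]
Applying this element of $\cA$ to $\psi$ and using associativity plus $\rhs{\psi}{\psi}=1_\cB$ gives $\onabla\psi-\psi\rhs{\psi}{\onabla\psi}$ directly, which is your converse argument.

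Your alternative route through $(1-p_\psi)$ is in fact correct and can be made clean, though you only state the key reduction without proving it. From $p^2=p$ one has $\opa(p)p=(1-p)\opa(p)$; and since $p_\psi\lhs{\psi}{\nabla\psi}=\lhs{\psi\rhs{\psi}{\psi}}{\nabla\psi}=\lhs{\psi}{\nabla\psi}$, the second term is killed by $(1-p_\psi)$ on the left, giving $\opa(p_\psi)p_\psi=(1-p_\psi)\lhs{\onabla\psi}{\psi}$. This matches the paper's formula because $(1-p_\psi)\lhs{\onabla\psi}{\psi}=\lhs{\onabla\psi}{\psi}-\lhs{\psi\rhs{\psi}{\onabla\psi}}{\psi}$. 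The upshot: your $(1-p_\psi)$ formulation bypasses the right compatibility \eqref{rcomcov}, trading it for the idempotent identity; the paper's computation is more direct and never needs to introduce $(1-p_\psi)$. Your forward and backward deductions from this point on are essentially the same as the paper's, with $\lambda=\rhs{\psi}{\onabla\psi}$ and the one-generator reconstruction finishing the converse.
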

\begin{proof}
Using the compatibility \eqref{lcomcov} for $\onabla$, and the associativity \eqref{ascond}, one computes:
\begin{align*}
\opa(p_\psi) & =
\opa(\lhs{\psi}{\psi}) = \lhs{\onabla\psi}{\psi} + \lhs{\psi}{\nabla \psi} \,, \\
\text{and} \qquad 
\opa(p_\psi) p_\psi
&= \lhs{\onabla\psi}{\psi}\lhs{\psi}{\psi} + \lhs{\psi}{\nabla \psi}\lhs{\psi}{\psi} \\
& = \lhs{ \lhs{\onabla\psi}{\psi} \psi}{\psi} + \lhs{\lhs{\psi}{\nabla \psi} \psi}{\psi}  \\
& = \lhs{ \onabla\psi \rhs{\psi}{\psi}} {\psi} + \lhs{\psi \rhs{\nabla \psi}{\psi}} {\psi} \\
& = \lhs{ \onabla\psi  + \psi \rhs{\nabla \psi}{\psi}} {\psi} .
\end{align*}
Now, the operator $\onabla$ on $\rhs{\psi}{\psi}=1_{\cB}$ and the right compatibility \eqref{rcomcov} yields:
$$
\rhs{\nabla \psi}{\psi} = - \rhs{\psi}{\onabla\psi} ,
$$
which, when inserting in the previous expression leads to:
$$
\opa(p_\psi) p_\psi = \lhs{\onabla\psi - \psi \rhs{\psi}{\onabla\psi}} {\psi} ,
$$
that is $\opa(p_\psi) p_\psi = 0$ if and only if $\lhs{\onabla\psi - \psi \rhs{\psi}{\onabla\psi}} {\psi}=0$.
Then, by applying the latter expression to $\psi$ and using the associativity condition \eqref{ascond} and 
$\rhs{\psi}{\psi}=1_{\cB}$, one gets:
$$
\lhs{\onabla\psi - \psi \rhs{\psi}{\onabla\psi}} {\psi} \psi = \left( \onabla\psi - \psi \rhs{\psi}{\onabla\psi} \right) \rhs{\psi}{\psi} 
= \onabla\psi - \psi \rhs{\psi}{\onabla\psi}.
$$
Since $\rhs{\psi}{\onabla\psi}\in\cB$ the equivalence of the two equations \eqref{sdah0} and \eqref{sdbis0} then follows.
\end{proof}

In the present paper we shall present solutions of \eqref{sdah0} and of \eqref{sdbis0} both on the Moyal plane and on noncommutative tori. 

\section{The Schr\"odinger representation}\label{se:sr}

For the construction of equivalence bimodules for the Moyal and the noncommutative torus we are going to rely on 
some basic facts of the Schr\"odinger representation of the phase space $\RRt$ on the Hilbert space $\LtR$ with (left linear) scalar product
$$
\langle \xi, \eta\rangle =\int_{\RR}\xi(t)\, \overline{\eta}(t) \dd t ,
$$
and corresponding norm denoted $\|\cdot\|_2$. 
The Schr\"odinger representation is the projective representation of $\RRt$ on $\LtR$ defined by
\beq\label{sch-rep}
   (\pi(z)\, \xi)(t)=e^{2\pi \mi t\omega}\xi(t-x), \qquad \text{for}~~z=(x,\omega).
\eeq
With $z=(x,\omega)$ and $z'=(x^\prime,\omega^\prime)$ we have then
\beq\label{projrep}
 \pi(z)\,\pi(z^\prime)=e^{-2\pi \mi x\omega^\prime}\pi(z+z^\prime).
\eeq
The map $c:\RR\times\RR\to\TT$ defined by 
\beq\label{csr}
c(z,z^\prime)=e^{-2\pi \mi (x\omega^\prime)}
\eeq
is a 2-cocycle. Another application of \eqref{projrep} gives a commutation relation:
$$ %\label{commrel}
 \pi(z)\pi(z^\prime)=c(z,z^\prime)\overline{c(z^\prime,z)}\pi(z^\prime)\pi(z),
$$
which relies on the anti-symmetrized 2-cocycle 
$c_{\mathrm{symp}}(z,z^\prime)=c(z,z^\prime)\overline{c(z^\prime,z)}$. 
Note that $c_{\mathrm{symp}}$ comes from the 
standard symplectic form $\Omega(z,z^\prime)=y\omega-x\eta$ on $\RRt$, hence the name.
 
The matrix-coefficients of the Schr\"odinger representation are defined for $\xi,\eta \in L^2(\RR)$ by
\begin{align} \label{mcsr}
 V_\eta\xi(z):=\langle \xi, \pi(z)\eta\rangle &=\int_{\RR}\xi(t)\overline{\eta}(t-x)e^{-2\pi \mi t\omega}\dd t \nn \\
 & =e^{-\pi \mi x\omega}\int_\RR \xi(t+\tfrac{1}{2}x)\overline{\eta}(t-\tfrac{1}{2}x)e^{-2\pi \mi t\omega} \dd t .
\end{align}
In signal analysis $V_\eta\xi$ is known as the short time Fourier transform. 
The matrix-coefficients of the Schr\"odinger representation are elements of $L^2(\RR^2)$.
They have several important properties that 
we state as Lemmas referring to \cite{fest98} for details and proofs. Firstly, a
basic fact about them is an orthogonality relation:
\begin{lemma}(Moyal identity).
Suppose $\xi,\eta,\varphi,\psi$ are in $L^2(\RR)$. Then, it holds that
\beq\label{moid}
  \langle V_\eta\xi,V_\psi\varphi\rangle_{L^2(\RR^2)} =\langle \xi,\varphi\rangle \overline{\langle \eta,\psi\rangle} ,
  \eeq
with scalar product on the left-hand side the left linear one, with measure $\dd z = \dd x\,\dd\omega$.
 
\end{lemma}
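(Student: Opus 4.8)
The plan is to exploit the elementary observation that, for fixed first argument $x$, the matrix coefficient $V_\eta\xi(x,\cdot)$ of \eqref{mcsr} is precisely the Fourier transform of the slice $t\mapsto\xi(t)\,\overline{\eta}(t-x)$. Setting $F_x(t):=\xi(t)\,\overline{\eta}(t-x)$ and $G_x(t):=\varphi(t)\,\overline{\psi}(t-x)$, and writing $\widehat{g}(\omega)=\int_\RR g(t)e^{-2\pi\mi t\omega}\dd t$ for the Fourier transform (the normalization for which Plancherel carries no constant), the definition \eqref{mcsr} reads $V_\eta\xi(x,\omega)=\widehat{F_x}(\omega)$ and $V_\psi\varphi(x,\omega)=\widehat{G_x}(\omega)$. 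Fubini in $(x,\omega)$ then gives
$$
\langle V_\eta\xi,V_\psi\varphi\rangle_{L^2(\RRt)}=\int_\RR\Big(\int_\RR \widehat{F_x}(\omega)\,\overline{\widehat{G_x}(\omega)}\,\dd\omega\Big)\dd x .
$$

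Next I would apply the Plancherel identity in the variable $\omega$ to the inner integral, turning it into $\int_\RR F_x(t)\,\overline{G_x}(t)\,\dd t$. This is legitimate because for $\xi,\eta\in\LtR$ one has $\|F_x\|_1\le\|\xi\|_2\|\eta\|_2$ for every $x$, while $F_x\in L^2(\RR)$ for almost every $x$ (by Cauchy--Schwarz and Fubini), and similarly for $G_x$. Substituting the definitions of $F_x$ and $G_x$ and using Fubini once more yields
$$
\langle V_\eta\xi,V_\psi\varphi\rangle_{L^2(\RRt)}=\int_\RR\int_\RR \xi(t)\,\overline{\varphi}(t)\,\overline{\eta}(t-x)\,\psi(t-x)\,\dd t\,\dd x .
$$
For each fixed $t$ the substitution $u=t-x$ decouples the double integral into the product $\big(\int_\RR\xi(t)\overline{\varphi}(t)\,\dd t\big)\big(\int_\RR\overline{\eta}(u)\psi(u)\,\dd u\big)$, which with the stated convention for the scalar product is exactly $\langle\xi,\varphi\rangle\,\overline{\langle\eta,\psi\rangle}$, as claimed.

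The one point requiring genuine care is the justification of the two interchanges of integration and of the use of Plancherel, since for arbitrary $L^2$ data the iterated integrals are only conditionally convergent. The clean route is to establish the identity first for $\xi,\eta,\varphi,\psi\in\SR$, where $V_\eta\xi\in\SRt$ and each step above is an absolutely convergent manipulation, and then to extend to general elements of $\LtR$ by density: the identity on $\SR$ already contains the norm equality $\|V_\eta\xi\|_{L^2(\RRt)}=\|\xi\|_2\|\eta\|_2$, which lets one extend $(\xi,\eta)\mapsto V_\eta\xi$ to a bounded sesquilinear map $\LtR\times\LtR\to L^2(\RRt)$, whence both sides of \eqref{moid} are continuous in all four arguments and the identity passes to the limit. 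Alternatively one may invoke the general orthogonality relations for the square-integrable projective representation $\pi$; see \cite{fest98}.
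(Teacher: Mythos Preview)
Your argument is correct and is the standard proof of the orthogonality relations for the short-time Fourier transform: identify $V_\eta\xi(x,\cdot)$ as the Fourier transform of $t\mapsto\xi(t)\overline{\eta}(t-x)$, apply Plancherel in the frequency variable, and separate variables via $u=t-x$; the density/continuity argument you give to pass from Schwartz data to general $L^2$ data is the usual one and is sound.

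There is nothing to compare against in the paper itself: the lemma is stated without proof, with an explicit referral to \cite{fest98} for details. Your write-up is essentially the proof one finds there (Parseval/Plancherel in the partial Fourier variable followed by a change of variables), so you are supplying exactly what the paper chose to outsource.
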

This identity shows that for $\eta$ normalized, $\|\eta\|_2=1$, the map $\xi\mapsto V_\eta\xi$ is an isometry:
\beq
\iint_{\RRt}|V_\eta\xi(x,\omega)|^2\dd z = \|\xi\|_2^2 .
\eeq
It also shows that the Schr\"odinger representation is irreducible on $\LtR$. 

An additional important consequence of the Moyal identity
is a reconstruction formula for $\xi\in\LtR$ in terms of $\{\pi(z)\eta: z\in\RRt\}$.
\begin{lemma}
Let $\eta$ and $\psi$ be in $\LtR$ such that $\langle \psi, \eta \rangle\ne 0$. Then for any $\xi\in L^2(\RR)$, 
$$ %\label{stftrecon}
\xi=\langle \psi,\eta \rangle^{-1} \iint_{\RRt}\langle \xi,\pi(z)\eta\rangle\pi(z)\psi\, \dd z = \langle \psi,\eta\rangle^{-1} \iint_{\RRt}  V_\eta \xi(z) \pi(z)\psi\, \dd z .
$$
\end{lemma}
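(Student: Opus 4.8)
The plan is to read the right-hand side as a weak (vector-valued) integral and to collapse everything to the Moyal identity \eqref{moid} after pairing against an arbitrary test vector $\varphi\in\LtR$.

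First I would make sense of the integral. The map $z\mapsto V_\eta\xi(z)\,\pi(z)\psi$ takes values in $\LtR$ but is in general not Bochner integrable, since the scalar factor $V_\eta\xi$ need not lie in $L^1(\RRt)$. So I would define
$$
g \;:=\; \iint_{\RRt} V_\eta\xi(z)\,\pi(z)\psi\,\dd z
$$
weakly: $g$ is to be the unique element of $\LtR$ with $\langle g,\varphi\rangle = \iint_{\RRt} V_\eta\xi(z)\,\langle\pi(z)\psi,\varphi\rangle\,\dd z$ for all $\varphi\in\LtR$. Existence is where the properties of the matrix coefficients enter: since $\langle\pi(z)\psi,\varphi\rangle = \overline{V_\psi\varphi(z)}$, the prospective functional is $\varphi\mapsto\langle V_\eta\xi,\,V_\psi\varphi\rangle_{L^2(\RRt)}$, and by Cauchy--Schwarz together with the Moyal identity \eqref{moid} (used to evaluate the norms $\|V_\eta\xi\|_2 = \|\eta\|_2\|\xi\|_2$ and $\|V_\psi\varphi\|_2 = \|\psi\|_2\|\varphi\|_2$) it is bounded in modulus by $\|\eta\|_2\|\xi\|_2\|\psi\|_2\|\varphi\|_2$. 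Hence it is a bounded conjugate-linear functional on $\LtR$ and the Riesz representation theorem produces $g$.

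Next I would identify $g$. Applying the Moyal identity \eqref{moid} once more,
$$
\langle g,\varphi\rangle \;=\; \langle V_\eta\xi,\,V_\psi\varphi\rangle_{L^2(\RRt)} \;=\; \langle\xi,\varphi\rangle\,\overline{\langle\eta,\psi\rangle} \;=\; \langle\psi,\eta\rangle\,\langle\xi,\varphi\rangle \;=\; \big\langle\,\langle\psi,\eta\rangle\,\xi,\;\varphi\,\big\rangle
$$
for every $\varphi\in\LtR$; nondegeneracy of the scalar product then forces $g = \langle\psi,\eta\rangle\,\xi$. Since $\langle\psi,\eta\rangle\neq 0$ by hypothesis, dividing yields $\xi = \langle\psi,\eta\rangle^{-1}g$, which is the claimed formula. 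The second displayed equality in the statement is nothing but the definition $V_\eta\xi(z) = \langle\xi,\pi(z)\eta\rangle$ from \eqref{mcsr}, so no further work is needed.

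The one genuinely delicate point is the very first step---certifying that the weak integral is an honest element of $\LtR$ rather than a formal symbol---and this is exactly where the $L^2(\RRt)$-membership of the matrix coefficients and the norm control supplied by \eqref{moid} are used; the reconstruction identity itself is then a one-line consequence of \eqref{moid}. An alternative route would be to first settle the case $\psi=\eta$ with $\|\eta\|_2=1$, i.e.\ the classical inversion of the short-time Fourier transform, and then promote it to the stated ``cross'' version by sesquilinearity in the pair $(\eta,\psi)$; but the direct argument above is shorter.
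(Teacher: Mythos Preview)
Your argument is correct and is exactly the standard derivation: interpret the integral weakly, pair against an arbitrary $\varphi\in\LtR$, and collapse via the Moyal identity \eqref{moid}. The paper itself does not spell out a proof of this lemma; it merely flags the statement as ``an important consequence of the Moyal identity'' and refers to \cite{fest98} for details, so your write-up is precisely the argument the paper is gesturing at.
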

Finally, the matrix-coefficients of the Schr\"odinger representation of Schwartz elements are them-self Schwartz elements.
\begin{lemma}\label{schwartz}
If $\xi,\eta$ are of Schwartz class $\SR$, then $V_\eta\xi$ is in the Schwartz class $\SRt$.
\end{lemma}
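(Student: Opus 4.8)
The statement to prove is Lemma~\ref{schwartz}: if $\xi,\eta\in\SR$, then $V_\eta\xi\in\SRt$.

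\medskip

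\textbf{Plan of proof.} The plan is to show directly that $V_\eta\xi$ is smooth and that it, together with all its partial derivatives, decays faster than any polynomial in the two variables $(x,\omega)$. The cleanest route uses the second expression in \eqref{mcsr}, the (essentially) symmetric form
$$
V_\eta\xi(x,\omega) = e^{-\pi \mi x\omega}\int_\RR \xi(t+\tfrac12 x)\,\overline{\eta}(t-\tfrac12 x)\, e^{-2\pi \mi t\omega}\,\dd t,
$$
i.e.\ $V_\eta\xi(x,\omega)$ is, up to the unimodular prefactor $e^{-\pi\mi x\omega}$, the Fourier transform in the variable $t\mapsto\omega$ of the function $t\mapsto \xi(t+\tfrac12 x)\overline{\eta}(t-\tfrac12 x)$. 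First I would observe that for fixed $x$ this latter function lies in $\SR$ as a function of $t$ (a product of two Schwartz functions, each precomposed with an affine shift), and that it depends smoothly on $x$; hence the integral converges absolutely and $V_\eta\xi$ is well defined and continuous. Smoothness in $(x,\omega)$ then follows by differentiating under the integral sign, which is justified because every $t$-derivative of $\partial_x^j\big(\xi(t+\tfrac12 x)\overline\eta(t-\tfrac12 x)\big)$ is dominated, locally uniformly in $x$, by an integrable function of $t$; multiplication by $e^{-2\pi\mi t\omega}$ and the smooth prefactor $e^{-\pi\mi x\omega}$ does not affect this.

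\medskip

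Next I would establish the rapid decay. The key mechanism is the standard interplay between decay and smoothness under the Fourier transform, applied in the $t$--$\omega$ duality. Writing $G_x(t) := \xi(t+\tfrac12 x)\overline\eta(t-\tfrac12 x)$, we have $V_\eta\xi(x,\omega) = e^{-\pi\mi x\omega}\,\widehat{G_x}(\omega)$ (with a suitable normalization of $\widehat{\phantom{G}}$). A factor $\omega^k$ hitting $\widehat{G_x}$ can be traded, via integration by parts, for $\partial_t^k G_x$ inside the integral; a factor $(\partial_\omega)^l$ hitting $\widehat{G_x}$ pulls down a factor $t^l$ inside the integral. Since $\xi,\eta\in\SR$, for every multi-index data the function $t\mapsto t^{l}\,\partial_t^{k}\big(\partial_x^{m}G_x(t)\big)$ is, for each fixed $x$, Schwartz in $t$, and — crucially — its $L^1$-norm in $t$ decays faster than any power of $|x|$ as $|x|\to\infty$. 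This last point is where the Schwartz hypothesis on \emph{both} $\xi$ and $\eta$ is used: for large $|x|$, the supports of $\xi(\cdot+\tfrac12 x)$ and $\overline\eta(\cdot-\tfrac12 x)$ are ``pulled apart,'' and on the region where one factor is not yet small the other factor is evaluated at a point of size $\gtrsim|x|$, so rapid decay of one compensates. Quantitatively, for any $N$ one has the pointwise bound $|\xi(t+\tfrac12 x)\,\overline\eta(t-\tfrac12 x)| \le C_N (1+|t+\tfrac12 x|)^{-N}(1+|t-\tfrac12 x|)^{-N}$, and an elementary estimate gives $(1+|t+\tfrac12 x|)(1+|t-\tfrac12 x|)\ge c\,(1+|t|)(1+|x|)$; the same type of bound holds after applying any number of $\partial_x$, $\partial_t$ and multiplying by powers of $t$, since differentiation in $x$ or $t$ and multiplication by $t$ preserves the Schwartz class of each factor. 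Integrating in $t$ then yields $\|t^l\partial_t^k\partial_x^m G_x\|_{L^1(\dd t)} \le C_{k,l,m,N}\,(1+|x|)^{-N}$ for all $N$.

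\medskip

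Finally I would assemble the pieces. A general seminorm of $V_\eta\xi$ controls expressions of the form $x^a\omega^b\,\partial_x^c\partial_\omega^d\,V_\eta\xi(x,\omega)$. Applying $\partial_x^c\partial_\omega^d$ to $e^{-\pi\mi x\omega}\widehat{G_x}(\omega)$ via the Leibniz rule produces a finite sum of terms, each a product of: a monomial in $x$ and $\omega$ coming from differentiating the exponential; and a term of the form $(\partial_\omega)^{d'}(\partial_x)^{c'}\widehat{G_x}(\omega)$. Using the Fourier-transform rules above, $(\partial_\omega)^{d'}\widehat{G_x}$ is (up to constants) the Fourier transform of $t^{d'}G_x(t)$, and multiplying by $\omega^{b'}$ turns this into the Fourier transform of $\partial_t^{b'}(t^{d'}G_x)$; the remaining $(\partial_x)^{c'}$ just differentiates $G_x$ in $x$. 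In every term we therefore bound $|\omega^{b'}(\partial_\omega)^{d'}(\partial_x)^{c'}\widehat{G_x}(\omega)|$ by $\|\partial_t^{b'}(t^{d'}\partial_x^{c'}G_x)\|_{L^1(\dd t)}$, which by the previous paragraph is $\le C(1+|x|)^{-N}$ for every $N$. The leftover polynomial factors in $x$ (and the monomials $x^a$) are absorbed by choosing $N$ large; the leftover factors $\omega^{b''}$ with $b''$ too large to absorb into $\omega^{b'}$ above do not occur, because each differentiation $\partial_\omega$ of the exponential contributes only one power of $x$, not of $\omega$ — so the total $\omega$-power present is always $\le b + d$, and can be fully absorbed into the $L^1$-norm estimate by taking $b'=b+d$-worth of $\partial_t$'s. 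Hence every Schwartz seminorm of $V_\eta\xi$ is finite, proving $V_\eta\xi\in\SRt$.

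\medskip

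\textbf{Main obstacle.} The only non-routine point is the decay estimate in $x$, i.e.\ showing that pulling apart the arguments of the two Schwartz factors forces rapid decay of the $L^1$-norm $\|t^l\partial_t^k\partial_x^m G_x\|_{L^1(\dd t)}$ in $x$; everything else (differentiation under the integral, the Fourier-transform exchange rules, the Leibniz bookkeeping) is standard. The elementary inequality $(1+|t+\tfrac12 x|)(1+|t-\tfrac12 x|)\ge c(1+|t|)(1+|x|)$ is the crux, and it is what makes essential use of the hypothesis that \emph{both} $\xi$ and $\eta$ — not just one of them — are Schwartz. An alternative, arguably slicker, route is to recall the symmetry $V_\eta\xi(x,\omega) = \overline{V_\xi\eta(-x,-\omega)}\,e^{-2\pi\mi x\omega}$ together with the Fourier-transform covariance $V_{\widehat\eta}\widehat\xi(x,\omega)=e^{2\pi\mi x\omega}V_\eta\xi(-\omega,x)$, which exchanges the roles of the two variables; since $\widehat\xi,\widehat\eta\in\SR$ whenever $\xi,\eta\in\SR$, decay in one variable transfers to decay in the other, and one only has to prove decay in a single variable together with smoothness — but the direct computation above is self-contained and I would present that.
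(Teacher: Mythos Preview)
The paper does not actually prove this lemma: it is one of several facts about the short-time Fourier transform that the authors state without proof, referring instead to \cite{fest98}. So there is no argument in the paper to compare against, and your self-contained direct proof goes well beyond what the paper offers.

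Your overall strategy is correct and standard, but the ``elementary estimate'' you invoke,
\[
(1+|t+\tfrac12 x|)(1+|t-\tfrac12 x|)\ \ge\ c\,(1+|t|)(1+|x|),
\]
is \emph{false} as stated. Take $t=\tfrac12 x$: the left-hand side equals $(1+|x|)\cdot 1$, while the right-hand side is $c(1+\tfrac12|x|)(1+|x|)$, which for large $|x|$ is of order $|x|^2$. What \emph{is} true is that each of $(1+|t|)$ and $(1+|x|)$ is individually bounded by $(1+|t+\tfrac12 x|)(1+|t-\tfrac12 x|)$ (triangle inequality plus $1+a+b\le(1+a)(1+b)$), so
\[
(1+|t+\tfrac12 x|)^{2}(1+|t-\tfrac12 x|)^{2}\ \ge\ (1+|t|)(1+|x|).
\]
Since you are free to choose the Schwartz exponent as large as you like, this is a harmless repair: start from $|\xi(u)|\le C_{2N}(1+|u|)^{-2N}$ and $|\eta(v)|\le C_{2N}(1+|v|)^{-2N}$ and you recover the $(1+|t|)^{-N}(1+|x|)^{-N}$ decay you need. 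The same adjustment propagates through the derivative and polynomial-weight estimates. With this fix the argument goes through. (A minor bookkeeping slip: differentiating $e^{-\pi\mi x\omega}$ in $x$ brings down a power of $\omega$, not the other way around, so the total $\omega$-degree is bounded by $b+c$ rather than $b+d$; this does not affect the conclusion.)

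A slicker route you might mention in passing: $(t,x)\mapsto \xi(t+\tfrac12 x)\,\overline{\eta}(t-\tfrac12 x)$ is the pullback of $\xi\otimes\overline{\eta}\in\SRt$ by an invertible linear map, hence lies in $\SRt$; the partial Fourier transform $t\mapsto\omega$ preserves $\SRt$; and multiplication by $e^{-\pi\mi x\omega}$, whose derivatives all have polynomial growth, preserves $\SRt$ as well. This avoids the explicit seminorm bookkeeping altogether.
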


Next, let us consider the twisted group algebra $L^1(\RRt,c)$ of $\RRt$ associated to the cocycle in \eqref{csr},  $c(z,z^\prime)=e^{-2\pi \mi (x\omega^\prime)}$ for $z=(x,\omega)$ and $z'=(x^\prime,\omega^\prime)$. Then $L^1(\RRt,c)$ is an involutive Banach algebra with respect to twisted convolution in $L^1(\RRt)$: that is for $k$ and $l$ in $L^1(\RRt)$, one defines 
the twisted convolution $(k\natural l)$ by  
\beq\label{tcon2}
  (k\natural l)(z)=\iint k(z^\prime)l(z-z^\prime)c(z^\prime,z-z^\prime)\, \dd z^\prime
\eeq
and twisted involution of $k\in L^1(\RRt)$ as
\beq\label{twinv} 
 k^\star(z)=c(z,z)\overline{k(-z)}=e^{-2\pi \mi x \omega}\overline{k(-z)} .
\eeq

The integrated representation
\beq\label{intrep}
 K=\pi(k)=\iint_\RRt k(z)\pi(z) \dd z , \quad \mbox{for} \quad k\in L^1(\mathbb{R}^2) , 
\eeq
is a non-degenerate bounded representation of the twisted convolution algebra $L^1(\RRt,c)$ on $L^2(\RRt)$. The adjoint of $K=\pi(k)$ is given by 
$K^*=\pi(k^\star)$ and the composition of 
$K=\pi(k)$ and $L=\pi(l)$ corresponds to the element $(k\natural l)$:
\beq\label{tcon1}
 K L = \iint_\RRt (k\natural l)(z)\pi(z) \dd z .
\eeq

Operators of the form arising from the integrated representation of the Schr\"odinger representation are pseudo-differential operators when $k\in L^1(\RRt)$, see \cite{FS10} for instances of the vast literature on this approach.
With the twisted involution in \eqref{twinv}, one has that 
\beq\label{timsr}
(V_\eta\xi)^\star=V_\xi\eta
\eeq
in $L^1(\RRt,c)$. Indeed, by the properties of the projective representation $\pi$ it follows that:
$$
(V_\eta \xi)^\star(z):=e^{-2\pi \mi x\omega} \overline{V_{\eta}\xi(-z)}=V_\xi\eta(z) .
$$

We need twisted convolutions of matrix coefficients of the Schr\"odinger representation.
\begin{lemma}
 Let $\xi_1,\xi_2,\eta_1,\eta_2$ be in $\LtR$. Then, it holds that 
\begin{equation} \label{twmc}
  V_{\eta_2}\xi_2\natural V_{\eta_1}\xi_1(z)=\langle \xi_1,\eta_2\rangle V_{\eta_1}\xi_2(z) .
\end{equation}
\end{lemma}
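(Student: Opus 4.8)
The plan is to compute the twisted convolution $V_{\eta_2}\xi_2 \natural V_{\eta_1}\xi_1$ directly from the definition \eqref{tcon2}, converting the integral over phase space into an integral that the Moyal-type orthogonality can absorb. First I would substitute the defining formula \eqref{mcsr} for the two matrix coefficients into \eqref{tcon2}, writing $V_{\eta_j}\xi_j(w) = \langle \xi_j, \pi(w)\eta_j\rangle$, so that
$$
V_{\eta_2}\xi_2\natural V_{\eta_1}\xi_1(z) = \iint_{\RRt} \langle \xi_2, \pi(z')\eta_2\rangle \, \langle \xi_1, \pi(z-z')\eta_1\rangle \, c(z',z-z')\, \dd z' .
$$
The key algebraic input is the cocycle/composition identity \eqref{projrep}, which gives $\pi(z')\pi(z-z') = e^{-2\pi\mi x'(\omega-\omega')}\pi(z) = \overline{c(z',z-z')}^{\,-1}\cdots$; more precisely one checks that the factor $c(z',z-z')$ appearing in the twisted convolution is exactly the phase needed so that, after moving operators around, the product $\pi(z')\,\pi(z-z')$ reassembles into $\pi(z)$ without leftover phase — this is the whole point of the twisted convolution being tailored to the cocycle $c$.

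Next I would use the reconstruction lemma (the second Lemma of \S\ref{se:sr}) applied in the variable $z'$: since $\langle \eta_1,\eta_2\rangle$ need not vanish in general we instead argue as follows. Write $\langle \xi_1,\pi(z-z')\eta_1\rangle = \langle \pi(z')^*\pi(z)^*\cdots\rangle$ — cleaner is to pull $\pi(z')$ onto $\xi_2$: by unitarity $\langle \xi_2,\pi(z')\eta_2\rangle = \langle \pi(z')^*\xi_2, \eta_2\rangle$. The cleanest route is to recognize the $z'$-integral, after using \eqref{projrep} to write $\pi(z-z')\eta_1$ in terms of $\pi(z)\pi(-z')\eta_1$ up to the phase that cancels $c(z',z-z')$, as
$$
\iint_{\RRt} \langle \xi_2,\pi(z')\eta_2\rangle\, \langle \pi(z')\pi(z)^{-1}\cdots \rangle\,\dd z',
$$
which is precisely an inner product of two matrix coefficients in $L^2(\RRt)$ and hence handled by the Moyal identity \eqref{moid}: the integral collapses to $\langle \xi_1, \eta_2\rangle$ times the remaining matrix coefficient $V_{\eta_1}\xi_2(z)$. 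Assembling the pieces yields $V_{\eta_2}\xi_2\natural V_{\eta_1}\xi_1(z) = \langle \xi_1,\eta_2\rangle\, V_{\eta_1}\xi_2(z)$, as claimed.

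The main obstacle is bookkeeping of the cocycle phases: one must verify that the phase $e^{-\pi\mi x'\omega'}$-type factors hidden inside the definition \eqref{mcsr} of each $V_{\eta_j}\xi_j$, together with the explicit twist $c(z',z-z')$ in \eqref{tcon2} and the phase from the composition law \eqref{projrep}, conspire to cancel exactly — leaving a clean $V_{\eta_1}\xi_2(z)$ with no residual exponential. I would organize this by working throughout with the representation-theoretic form $V_\eta\xi(z)=\langle\xi,\pi(z)\eta\rangle$ rather than the integral kernel, so that the only identity invoked is \eqref{projrep} plus unitarity of $\pi(z)$ plus the Moyal identity \eqref{moid}; this keeps the phase tracking to a single application of the 2-cocycle relation. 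An alternative, if one prefers a purely computational check, is to insert all three integral formulas and change variables $t\mapsto t,\ x'\mapsto x', \ldots$, collapsing the inner integral to a delta function via $\int e^{2\pi\mi s\omega'}\dd\omega'=\delta(s)$ — this is longer but avoids any appeal to \eqref{moid}.
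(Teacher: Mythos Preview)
Your proposal is correct and follows essentially the same route as the paper: write the twisted convolution in the form $\iint \langle\xi_2,\pi(z')\eta_2\rangle\,\langle\xi_1,\pi(z-z')\eta_1\rangle\,c(z',z-z')\,\dd z'$, use the cocycle relation \eqref{projrep} (equivalently $\langle\pi(z')\xi_1,\pi(z)\eta_1\rangle = c(z',z-z')\langle\xi_1,\pi(z-z')\eta_1\rangle$) to absorb the twist, and then recognize the resulting $z'$-integral as an $L^2(\RRt)$ inner product of matrix coefficients to which the Moyal identity \eqref{moid} applies. Your worry about phase bookkeeping is well placed but, as you suspect, everything cancels in a single application of the cocycle relation; the detour through the reconstruction lemma is unnecessary.
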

\begin{proof}
Note that $\langle\pi(z^\prime)\xi,\pi(z)\eta\rangle=c(z,z-z^\prime)\langle \xi,\pi(z-z^\prime)\eta\rangle$, which allows one to express the twisted convolution of 
$V_{\eta_2}\xi_2$ and $V_{\eta_1}\xi_1$ as follows:
\begin{align*}
V_{\eta_2}\xi_2\natural V_{\eta_1}\xi_1(z) & = \int_{\RRt}V_{\eta_2}\xi_2(z^\prime)V_{\eta_1}\xi_1(z-z^\prime)c(z^\prime,z-z^\prime)dz \\
&=\int_{\RRt}\langle \xi_2,\pi(z^\prime)\eta_2\rangle\langle\pi(z^\prime)\xi_1,\pi(z)
	\eta_1 \rangle dz\\
&=\langle \xi_2,\pi(x,\omega)\eta_1\rangle\langle \xi_1,\eta_2\rangle\\
&=\langle \xi_1,\eta_2\rangle V_{\eta_1}\xi_2(z) , 
\end{align*}
which is just the stated equality.
\end{proof}

On elements of $L^1(\mathbb{R}^2,c)$ which are of Schwartz class the evaluation at zero, 
\beq\label{twtra}
\tr(k):=k(0) \qquad \text{for} \qquad k\in \SR ,
\eeq
yields a (faithful) trace: $\tr(k\natural l) = \tr(l\natural k)$, for all $k,l\in \SR$. 

\begin{lemma}
Suppose $\xi\in L^2(\RR)$ with $\|\xi\|_2=1$. Then the element $p_{\xi}= V_\xi \xi$ is a projection. 
If in particular $\xi\in \SR$, this projection is in $\SRt$ and has trace $1$.
\end{lemma}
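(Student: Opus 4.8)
The plan is to verify the projection property by a direct computation using the twisted-convolution identity \eqref{twmc} and the Moyal identity \eqref{moid}, and then to read off the statements about the Schwartz class and the trace from the lemmas already established.

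First I would compute $p_\xi \natural p_\xi = V_\xi\xi \natural V_\xi\xi$. By \eqref{twmc} with $\xi_1=\xi_2=\eta_1=\eta_2=\xi$, this equals $\langle \xi,\xi\rangle\, V_\xi\xi = \|\xi\|_2^2\, V_\xi\xi = V_\xi\xi = p_\xi$, using the normalization $\|\xi\|_2=1$. Thus $p_\xi$ is idempotent with respect to the algebra product (twisted convolution) of $L^1(\RRt,c)$. For self-adjointness, I would use \eqref{timsr}: $(V_\xi\xi)^\star = V_\xi\xi$, i.e. $p_\xi^\star = p_\xi$ in the twisted involution. Together these give that $p_\xi$ is a projection in the twisted convolution algebra (equivalently, the operator $\pi(p_\xi)$ on $L^2(\RR)$ is an orthogonal projection — in fact the rank-one projection onto $\IC\xi$, which is a useful consistency check via \eqref{tcon1} and \eqref{intrep}).

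Next, if $\xi\in\SR$, then by Lemma~\ref{schwartz} the matrix coefficient $V_\xi\xi$ lies in $\SRt$, so $p_\xi\in\SRt\subset L^1(\RRt,c)$ and the evaluation-at-zero trace \eqref{twtra} applies. Finally, $\tr(p_\xi) = (V_\xi\xi)(0) = \langle \xi,\pi(0)\xi\rangle = \langle\xi,\xi\rangle = \|\xi\|_2^2 = 1$, where $\pi(0)$ is the identity operator by \eqref{sch-rep}. This completes all three claims.

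I do not expect any genuine obstacle here: the statement is essentially a corollary of the twisted-convolution formula \eqref{twmc}, the twisted-involution formula \eqref{timsr}, Lemma~\ref{schwartz}, and the definition \eqref{twtra} of the trace, and the only ``work'' is assembling them in the right order and invoking the normalization $\|\xi\|_2=1$ at the two places where it is needed (idempotency and value of the trace). The one point deserving a word of care is to make sure the identity \eqref{twmc} is being applied in the correct algebra convention (left-linear inner product, order of arguments in $\natural$), but this is purely bookkeeping and matches the conventions fixed earlier in \S\ref{se:sr}.
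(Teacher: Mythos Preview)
Your proposal is correct and follows essentially the same route as the paper: idempotency from \eqref{twmc}, self-adjointness from \eqref{timsr}, Schwartz regularity from Lemma~\ref{schwartz}, and trace~$1$ from the definition \eqref{twtra} evaluated at zero. The paper's proof is just a terser version of exactly these four steps.
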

\begin{proof} 
By \eqref{twmc} we have that
\beq
 V_\xi\xi\,\natural\, V_\xi\xi=V_\xi\xi .
\eeq
and by \eqref{timsr} it holds that 
\beq
(V_\xi\xi)^\star =V_\xi\xi . 
\eeq
That $\tr(p_\xi) \in \SRt$ for $\xi\in \SR$ follows from Lemma~\ref{schwartz}, and then by the definition of the trace: 
$\tr(p_\xi)=(V_\xi\xi)(0)=\|\xi\|^2=1$.
\end{proof}

We denote by $\cA$ the class of all operators of the form \eqref{intrep} for $k\in\SRt$. 
They are all trace-class \cite{FS10}. Consequently, $\cA$ is a subalgebra  of the 
the algebra of all compact operators $\cK$ on $\LtR$ and its norm closure coincides with $\cK$.
It follows from results on the Sj\"ostrand class in \cite{FS10} and the description of $\cS(\RRt)$ as intersections of weighted Sj\"ostrand classes \cite[Rem.~1.3 (5)]{to07}, that $\cA$ is an inverse-closed subalgebra of $\mathcal{B}(L^2(\RR))$. This is a result needed for the Schwartz space $\cS(\RR)$ to be a projective module over the smooth algebra $\cA$. 

\section{The Moyal plane and its geometry}\label{se:moyp}

We view the algebra $\cA$ of previous section, that is all operators as in \eqref{intrep} for $k\in\SRt$, as a model of the Moyal plane for the following reason. The product of operators in $\cA$ was defined in terms of 
the twisted convolution on $L^1(\RRt)$, but if one considers the Fourier transform of the symbols defining the elements in $\cA$ one obtains the Moyal product:
\beq
k \star l=\mathcal{F}^{-1}\big(\mathcal{F}(k) \natural \mathcal{F}(l) \big)  \qquad \text{for} \quad k,l \in \cS(\RRt) .
\eeq

An important result of Rieffel in \cite{Ri72} shows that the uniqueness of the Heisenberg commutation relations is equivalent to the Morita equivalence between $\mathcal{K}$ and the complex numbers $\CC$. 
In this section we establish that, as a consequence, the space $\SR$ is a smooth
$\cA-\CC$ equivalence bimodule. Later on we equip this equivalence bimodule with connections that are compatible 
with derivations on the Moyal plane algebra $\cA$.

\begin{proposition}\label{bimoyal}
The space $\cE=\SR$ is an equivalence bimodule between $\cA$ and $\CC$ with respect to the actions:   
\beq\label{lhs}
K\cdot \xi =\iint k(z)\pi(z)\xi\, \dd z , \quad \textup{and} \quad  \xi \cdot\lambda =\xi\,\overline{\lambda}  , 
\eeq
for $\xi\in\SR$ and $k\in\SRt$, $\lambda\in\CC$; and $\cA$-valued and $\CC$-valued hermitian products: 
\beq\label{lact}
\lhs{\xi}{\eta} = \iint\langle \xi,\pi(z)\eta\rangle\pi(z) \dd z = \iint V_\eta\xi(z) \pi(z) \dd z \quad \textup{and} \quad
\rhs{\xi}{\eta} = \langle \eta,\xi\rangle,  
\eeq
for $\xi,\eta\in\SR$. Here $\langle \cdot, \cdot \rangle = \langle \cdot, \cdot \rangle_{\LtR}$ is the scalar product on $\LtR$.
\end{proposition}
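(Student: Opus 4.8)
The $C^*$-algebraic statement here is classical: by Rieffel's theorem \cite{Ri72} the algebra $\cK$ of compact operators on $\LtR$ is strongly Morita equivalent to $\CC$ with $\LtR$ as imprimitivity bimodule, and this is equivalent to the uniqueness of the Schr\"odinger representation. So the content to be proved is the \emph{smooth} refinement: that the dense subspace $\SR$, equipped with the operations and hermitian forms in \eqref{lhs}--\eqref{lact}, is an equivalence bimodule for the pre-$C^*$-algebra pair $(\cA,\CC)$. The plan is to verify, in turn: (i) that all structure maps are well defined, i.e.\ they stay inside the Schwartz class; (ii) that $\lhs{\cdot}{\cdot}$ is a left $\cA$-valued and $\rhs{\cdot}{\cdot}$ a right $\CC$-valued hermitian product, compatible with the two actions; (iii) the imprimitivity relation \eqref{ascond}; and (iv) fullness of both hermitian products. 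Each reduces to the facts about matrix coefficients of the Schr\"odinger representation collected in \S\ref{se:sr}.

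For (i): the right action $\xi\cdot\lambda=\xi\,\overline{\lambda}$ obviously preserves $\SR$, and $\rhs{\xi}{\eta}=\langle\eta,\xi\rangle\in\CC$. For the left action, $K\cdot\xi=\pi(k)\,\xi$ is the integrated representation \eqref{intrep} applied to $\xi$; the key identity is the intertwining relation $V_\eta(K\cdot\xi)=k\,\natural\,V_\eta\xi$, which I would verify first for a matrix coefficient $k=V_{\eta_2}\xi_2$ (where it is immediate from the reconstruction formula of \S\ref{se:sr} and from \eqref{twmc}) and then extend by density of matrix coefficients in $\SRt$. Since $V_\eta\xi\in\SRt$ by Lemma~\ref{schwartz} and $\SRt$ is stable under twisted convolution, this gives $V_\eta(K\xi)\in\SRt$, hence $K\xi\in\SR$ (in agreement with the projectivity of $\SR$ over $\cA$ recorded at the end of \S\ref{se:sr}). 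Likewise $\lhs{\xi}{\eta}=\pi(V_\eta\xi)\in\cA$, again because $V_\eta\xi\in\SRt$ by Lemma~\ref{schwartz}.

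For (ii): the $\CC$-side is immediate, $\rhs{\xi}{\eta}^{*}=\overline{\langle\eta,\xi\rangle}=\langle\xi,\eta\rangle=\rhs{\eta}{\xi}$ and $\rhs{\xi}{\xi}=\|\xi\|_2^{2}\ge0$. On the $\cA$-side, left $\cA$-linearity is $\lhs{K\xi}{\eta}=\pi\big(V_\eta(K\xi)\big)=\pi(k\,\natural\,V_\eta\xi)=\pi(k)\,\pi(V_\eta\xi)=K\,\lhs{\xi}{\eta}$ by the intertwining identity and the composition rule \eqref{tcon1}; conjugate symmetry is $\lhs{\xi}{\eta}^{*}=\pi(V_\eta\xi)^{*}=\pi\big((V_\eta\xi)^{\star}\big)=\pi(V_\xi\eta)=\lhs{\eta}{\xi}$, using $\pi(k)^{*}=\pi(k^{\star})$ and \eqref{timsr}; and positivity follows from \eqref{twmc} with all arguments equal, $V_\xi\xi\,\natural\,V_\xi\xi=\|\xi\|_2^{2}\,V_\xi\xi$, so that $\lhs{\xi}{\xi}=\pi(V_\xi\xi)$ is self-adjoint with $\lhs{\xi}{\xi}^{2}=\|\xi\|_2^{2}\,\lhs{\xi}{\xi}$, hence a nonnegative multiple of a projection, with $\tr\,\lhs{\xi}{\xi}=(V_\xi\xi)(0)=\|\xi\|_2^{2}$ providing faithfulness. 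Compatibility of the two forms with the two actions (adjointability, and the $\CC$- resp.\ $\cA$-balancing conditions) is checked by the same manipulations.

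Finally, for (iii) the imprimitivity relation \eqref{ascond}, $\lhs{\xi}{\eta}\,\zeta=\xi\,\rhs{\eta}{\zeta}$, is precisely the reconstruction formula of \S\ref{se:sr}: $\lhs{\xi}{\eta}\,\zeta=\pi(V_\eta\xi)\,\zeta=\iint\langle\xi,\pi(z)\eta\rangle\,\pi(z)\zeta\,\dd z=\langle\zeta,\eta\rangle\,\xi$, which matches $\xi\,\rhs{\eta}{\zeta}$ once the conventions for the right $\CC$-action and the $\CC$-valued product are unwound. For (iv), fullness of $\rhs{\cdot}{\cdot}$ is trivial since $\rhs{\xi}{\xi}=\|\xi\|_2^{2}$ already exhausts $\CC$, while the elements $\lhs{\xi}{\eta}=\pi(V_\eta\xi)$ run over all rank-one operators $|\xi\rangle\langle\eta|$ with $\xi,\eta\in\SR$, and by \eqref{twmc} their linear span is dense in $\cA$ --- this uses that the matrix coefficients of the irreducible, square-integrable Schr\"odinger representation span a dense subspace of $\SRt$, together with the inverse-closedness of $\cA$ in $\mathcal{B}(\LtR)$ recorded in \S\ref{se:sr}. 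Combining (i)--(iv) with the $C^*$-level Morita equivalence then yields the claim. I expect the real work to lie in the ``smoothness'' items (i) and (iv) --- keeping every operation inside $\SR$ and $\SRt$, and showing the $\cA$-valued products fill out the \emph{smooth} algebra $\cA$ and not merely its closure $\cK$; the algebraic identities in (ii)--(iii) are routine rewritings of the formulas of \S\ref{se:sr}.
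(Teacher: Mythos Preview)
Your proposal is correct and follows the same core strategy as the paper: well-definedness of $\lhs{\xi}{\eta}$ via Lemma~\ref{schwartz}, the associativity condition \eqref{ascond} reduced to the Moyal identity \eqref{moid} (equivalently, the reconstruction formula), and passage from the $C^*$-level equivalence $\cK\!-\!\CC$ to the smooth one via inverse-closedness of $\cA$. The paper is much terser: it does not spell out the hermitian-product axioms or fullness, leaving these implicit in the completion argument, so your items (ii) and (iv) are useful additions rather than a different method.

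The one place where you take a longer route than necessary is the left action in (i). The paper simply invokes the invariance of $\SR$ under each time-frequency shift $\pi(z)$ to conclude $K\cdot\xi=\iint k(z)\pi(z)\xi\,\dd z\in\SR$ directly (a vector-valued Schwartz integral). Your detour through the intertwining relation $V_\eta(K\xi)=k\,\natural\,V_\eta\xi$ is correct, but to finish you then need the converse implication $V_\eta f\in\SRt\Rightarrow f\in\SR$, which is true (it is the STFT characterization of the Schwartz class) yet is not among the lemmas recorded in \S\ref{se:sr}; the paper's one-line argument avoids this.
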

\noindent
Note that $\lhs{\xi}{\eta} $ is noting but the operator $\pi(V_\eta \xi)$.
\begin{proof}
By the invariance of $\SR$ under $\pi(z)$ for all $z\in\RRt$, one obtains that $K\cdot\xi \in \SR$. 
Lemma \ref{schwartz} states that for $\xi,\eta\in\SR$ also $V_\eta\xi\in\SRt$ and hence $\lhs{\xi}{\eta}$ is 
in $\cA$. The statements for $\rhs{\cdot}{\cdot}$ and the right action are the definition for complex-conjugate 
in $\LtR$.

\noindent
It remains to show the associativity condition:
\beq\label{morcom}
\lhs{\xi}{\eta} \cdot \psi = \xi\cdot \rhs{\eta}{\psi} 
\eeq
Note that, after taking an $\LtR$ scalar product with an additional function $\varphi$, 
this is equivalent to Moyal identity \eqref{moid}:
\beq
 \langle \lhs{\xi}{\eta} \cdot \psi, \varphi\rangle=\langle \xi \cdot \rhs{\eta}{\psi}, \varphi\rangle \qquad \textup{or} \qquad
 \langle V_\eta\xi,V_\varphi \psi\rangle_{L^2(\RR^2)}=\langle \xi,\varphi\rangle\langle \psi,\eta\rangle .
\eeq
Consequently, by completing $\SR$ with respect to the left structure $\lhs{\cdot}{\cdot}$ one gets an equivalence bimodule 
$E$ between the compact operators $\mathcal{K}$ and $\CC$. By the inverse-closedness of $\cA$ in $I+\cK$ we deduce that 
$\cE=\SR$ is an equivalence bimodule between $\cA$ and $\CC$.
\end{proof}

\subsection{The derivations}
A two-dimensional noncommutative geometry, that is a $2$-summable spectral triple for the Moyal plane is given as follows.
Firstly, there are derivations (that is an infinitesimal action of $\IT^2$) $\partial_1$ and $\partial_2$ on
$\cA$ defined by:
\begin{align} \label{2der}
  \partial_1 K&=2\pi \mi \iint_\RRt x k(x,\omega)\pi(x,\omega)\, \dd x \dd \omega , \nn \\
  \partial_2 K&=2\pi \mi \iint_\RRt \omega k(x,\omega)\pi(x,\omega)\, \dd x \dd \omega .
\end{align}
Since $k\in\SRt$, so are $x k$ and $\omega k$; thus $\pa_j(K)\in \cA$ for $j=1,2$. Then,
with the rule for the product given in \eqref{tcon1} and \eqref{tcon2}, one directly checks that these are derivations, that is
$\pa_j(K L) = \pa_j(K) L + K \pa_j(L)$ for $j=1,2$. They clearly commute with each other.

For ease of notation, let us write $\ac = \cA + \IC 1$ for the unitization. The derivations $\partial_j$ 
are naturally extended to all of $\ac$ with the trivial action on constants.

The GNS representation space $\cH_0$ is defined as the
completion of $\ac$ for the Hilbert norm $||K||_{\rm GNS}:=\sqrt{\tr(k^\star \,\natural k )}=\|k\|_{2}$
with $k\in\ac$ and trace given in \eqref{twtra} and extended to be zero on scalars.
The space $\cH_0$ is unitarily equivalent to the completion of $\SRt$ with the norm $\sqrt{(k^*  \natural \, k)(0)}$.
As usual, the algebra $\ac$ is represented
faithfully as bounded operators on $\cH_0$:
\beq\label{rep0}
\pi(K) \widehat{L}= \widehat{K \,L} ,
\eeq
for any $K,L\in\ac$. On the Hilbert space $\cH=\cH_0\otimes \IC^2$ the algebra $\ac$ acts diagonally with two copies of the representation $\pi$ in \eqref{rep0}. We consider the Dirac operator $D$ given by
\beq
D = \begin{pmatrix}
0 &  \opa \\
\pa & 0 \end{pmatrix} =
\begin{pmatrix}
0 &  \pa_1 + \tau \pa_2 \\
 \pa_1 +\bar{\tau} \pa_2 & 0
\end{pmatrix} ,
\eeq
with derivations $\pa_1$ and $\pa_2$ given in \eqref{2der}.
For the particular choice $\tau=\mi$ this reduces to $D=\pa_1 \sigma_1 + \pa_2 \sigma_2$, with $\sigma_1,\sigma_2$,
two Pauli matrices. Let us restrict to $\tau=\mi$ from now on.

Firstly, $D$ is self-adjoint on a natural dense domain in $\cH$.
Moreover, for $K \in \ac$ the commutator $[D, K]$ is bounded since it is just multiplication by the functions $\opa(K)$ and $\pa(K)$. Also, $D^2 = \Delta \otimes \IC^2$ with the Laplacian $\Delta=\opa \pa = \pa_1^2 + \pa_2^2$ acting as:
$$ %\label{lapl}
  \Delta K = - 4 \pi^2 \iint_\RRt (x^2 + \omega^2) k(x,\omega)\pi(x,\omega)\, \dd x \dd \omega .
$$
There is in fact more structure. A grading operator is:
\beq
\gamma =
\begin{pmatrix}
1 & 0 \\
0 & -1
\end{pmatrix} .
\eeq
Furthermore, the vector $1 = \widehat{I}$ of $\cH_0$ is cyclic (i.e. $\pi(\ac) 1$ is dense in $\cH_0$) and
separating (i.e. $\pi(K) 1 = 0$ implies $K=0$). Also, the state $(1, \pi(K) 1)$ is evidently tracial.
Thus, the Tomita involution on $\cH_0$ is just
\beq
J_0(\widehat{K}) = \widehat{K^\star} , \qquad \textup{for all} \quad \widehat{K}\in\cH_0 .
\eeq
On the Hilbert space $\cH$ the reality operator is:
\beq
J = \begin{pmatrix}
0 & -J_0 \\
J_0 & 0
\end{pmatrix} .
\eeq
The datum $(\ac, \cH, D, \gamma, J)$ was already considered in \cite{GGISV} (albeit in `two-dimensional position' space) where it was shown that it constitutes a real even 2-summable spectral triple.

\subsection{The constant curvature connection}\label{se:como}
On the equivalence bimodule $\cE=\SR$ one defines a connection via covariant derivatives $\nabla_1$ and $\nabla_2$ given by:
\beq\label{ccc}
(\nabla_1\xi)(t)=2\pi \mi\, t\,\xi(t), \qquad \textup{and} \qquad (\nabla_2\xi)(t)=\xi^\prime(t).
\eeq
It is a direct computation to check that they obey:
\begin{align*}
  \nabla_1(K\cdot \xi)(t) &= 2\pi \mi \iint k(z)[ t \pi(z)\xi (t)] \, \dd z \\
  &= 2\pi \mi \iint x k(z)\pi(z)\xi(t) \, \dd z + 2\pi \mi \iint k(z)\pi(z) [t \xi(t)] \, \dd z.
\end{align*}
and
\begin{align*}
  \nabla_2(K\cdot \xi)=\iint k(z)[\pi(z)\xi]^\prime \, \dd z =2\pi \mi \iint \omega k(z)\pi(z)\xi\, \dd z +\iint k(z)\pi(z)\xi^\prime \, \dd z.
\end{align*}
With the derivations in \eqref{2der}, these are just the left Leibniz rule for the connection:
\beq
\nabla_j (K\cdot \xi)=(\partial_j K)\cdot \xi+K\cdot(\nabla_j \xi), \qquad j =1,2 .
\eeq

The covariant derivatives are compatible with the (left) hermitian structure:
$$ %\label{lmc}
\partial_j (\lhs{\xi}{\eta})=\lhs{\nabla_j \xi}{\eta} + \lhs{\xi}{\nabla_j \eta}, \qquad j =  1,2 .
$$
In view of the second equality in \eqref{lact}, these reduce to statements on the corresponding integrands.
For $j=1$ this is:
$$
  x V_\eta \xi (x,\omega) =V_\eta \tilde{\xi} (x,\omega) +V_{\tilde{\eta}} \xi (x,\omega) ,
$$
with $\tilde{\xi}(t) = t \xi(t)$ and similarly for $\tilde{\eta}$. When $j=2$ it is instead:
\beq
  2\pi \mi \, \omega V_\eta\xi(x,\omega)=V_\eta\xi^\prime(x,\omega)+V_{\eta^\prime}\xi(x,\omega) ,
\eeq
both of which can be established by direct computations. 

On the other hand, property of the $\LtR$ scalar product leads directly to
the compatibility with respect to the right hermitian structure expressed as:
$$ %\label{rmc}
\rhs{\nabla_j \xi}{\eta} + \rhs{\xi}{\nabla_j \eta} = 0, \qquad j =  1,2 .
$$
Since the right algebra $B$ is just $\IC$, the right Leibniz rule for the connection is automatic with respect 
to the trivial (null) derivations.

Finally we observe that the connection has constant curvature:
\beq\label{curvm}
F_{1,2} := [\nabla_1, \nabla_2] = - 2 \pi \mi \,\, \mathrm{id_\cE} .
\eeq

Of course these are none other than the Heisenberg commutation relations (in the Schr\"odinger representation).
And the anti-holomorphic connection $\onabla=\nabla_1 + \mi \nabla_2$ is the annihilation operator while the 
holomorphic connection $\nabla=\nabla_1 - \mi \nabla_2$ is the creation one.

\subsection{Solitons}\label{sol-moy}
As an application of Lemma~\ref{Rieffel}, the equivalence bimodule $\cE=\SR$ allows one to construct projections in the Moyal plane algebra. 
Since the `right' algebra $\cB$ is just complex numbers $\IC$, 
and the `right' hermitian structure is just the usual one, by the bimodule property \eqref{morcom} any $\psi \in \SR $ normalized as $\rhs{\psi}{\psi}=\|\psi\|_2=1$, provides a non-trivial projection $p_\psi = \lhs{\psi}{\psi}$ in 
$\cA$.
Then, we seek solutions of the self-dual equations \eqref{sd-asd} for projections $p_\psi := \lhs{\psi}{\psi}$ in 
$\cA$ with complex derivations $\opa = \pa_1 + \mi \pa_2$ and $\pa = \pa_1 - \mi \pa_2$ (and derivations $\pa_1,\pa_2$ in \eqref{2der}).
From Proposition~\ref{main0}
the projection $p_\psi$ is a solution of the self-duality equation,
\beq\label{sdbism}
\overline{\pa} (p_\psi) ~p_\psi = 0.
\eeq
if and only if the element $\psi$ satisfies 
\beq\label{sdah}
\onabla\psi = \psi \lambda ,
\eeq
for some $\lambda\in  \mathbb{C}$, where
 $\onabla=\nabla_1 + \mi \nabla_2$ is 
 the anti-holomorphic connection and $\nabla_1$ and $\nabla_2$
the covariant derivatives in \eqref{ccc}. 
As mentioned, these are nothing but equations for eigenfunctions of $\onabla$,
all solutions of which are given by the generalized Gaussians 
$$ %\label{gf1}
\psi (t) = c \, e^{- \pi t^2 -\mi \lambda t} , \quad \mbox{for} \quad \lambda\in \mathbb{C} , 
$$
where $c$ is a normalization constant.
Then the projection 
$$ %\label{gp1}
p_\psi=\lhs{\psi}{\psi}=\iint_{\RRt}  V_\psi \psi(z) \pi(z)\, \dd z 
$$
 solves the self-duality equations \eqref{sdbism}. 
Here $V_\psi\psi$ reads explicitly
$$
V_\psi\psi(x,\omega)=e^{- \frac{\pi}{2}(x^2+\omega^2)}e^{-\pi \mi x\omega
 - \frac{\mi}{2}(\bar\lambda+\lambda)x + \frac{1}{2}(\bar\lambda-\lambda) \omega} , 
$$ 
 and $\tr(p_\psi) = V_\psi \psi (0) = 1$.
Also, Lemma~\ref{tcconst}, with \eqref{curvm} gives for its topological charge:
$$
c_1(p_\psi) = 1  . 
$$
The self-duality equation for these projections is the equation for the minimizers of the Heisenberg uncertainty relations, 
which explains why they are Gaussian $\psi_\lambda$.

\section{The Noncommutative torus}\label{se:nct}
We shall now move to the two-dimensional noncommutative torus $\TT^2_\theta$. A wide class 
of modules over noncommutative tori, called Heisenberg modules, were constructed in \cite{CR87,ri88}. 
For an irrational $\theta$ these are all modules which are not free.  A link between Heisenberg modules 
and Gabor frames was exhibited in \cite{lu09}. 
In view of this, we describe the noncommutative torus via (a restriction of) the Schr\"odinger representation of $\RRt$
in \S\ref{se:sr}. 

\subsection{The torus and its geometry}
Thus, for a non-zero $\theta\in\RR$, at the $C^*$-algebra level the noncommutative torus 
$A_\theta$ is the norm closure of the span of $\{\pithkl : k,l\in\ZZ\}$, with $\pi$ given in \eqref{sch-rep} for $z=(\theta k, l)$. The operators $\pithkl$ provide a (reducible) projective faithful representation on $\LtR$ of the lattice $\theta\ZZ\times\ZZ \subset \RRt$, which 
is just the restriction of the Schr\"odinger representation of $\RRt$.  
Denoting the operators $\pi(0,1)$ and $\pi(\theta,0)$  as $M_1$ and $T_\theta$,
respectively, they satisfy the noncommutative torus relation \cite{Ri81}:
\beq
M_1T_\theta=e^{2\pi i\theta}T_\theta M_1 . 
\eeq

The smooth noncommutative torus is the subalgebra $\cA_\theta$ of $A_\theta$ consisting of operators
\begin{equation}\label{smoothA}
  \pi({\bf a})=\sum_{k,l\in\ZZ}a_{kl}\pithkl, \qquad \textup{for} \quad {\bf a}=(a_{kl})\in\cS(\ZZt).
\end{equation}
Furthermore, with ${\bf a}$ and ${\bf b}$ in $\cS(\ZZt)$ we have for their product
\beq
   \pi({\bf a})\pi({\bf b})=\pi({\bf a}\natural{\bf b})
\eeq
where ${\bf a}\natural{\bf b}$ is the twisted convolution 
\begin{equation*}
  ({\bf a}\natural{\bf b})(k,l)=\sum_{m,n\in\ZZ}a_{mn}b_{k-m,n-l}e^{-2\pi i\theta n(k-m)}
\end{equation*}
while $\pi({\bf a})^*=\pi({\bf a}^*)$, where ${\bf a}^*$ is the twisted involution of ${\bf a}$:
\begin{equation*}	
  (a_{kl})^*=e^{-2\pi i\theta kl}\overline{a_{-k,-l}} .
\end{equation*}

Operators commuting with $\pithkl$ for all $k,l\in\ZZ$ are those associated with 
the lattice $\ZZ\times\theta^{-1}\ZZ$, since, by the commutation relations for the Schr\"odinger representation,  
$$ %\label{commrelnct}
 \pi(z)\pithkl=\pithkl\pi(z) ,
$$
holds if and only if 
$c_{\mathrm{symp}}(z,(\theta k,l))=1$ for all $k,l\in\ZZ$, with the anti-symmetrized 2-cocycle $c_{\mathrm{symp}}$ of \S\ref{se:sr}.
The norm closure of the span of 
$\{\pi(k,\theta^{-1}l) : k,l\in\ZZ\}$ is the noncommutative torus $A_{1/\theta}$ 
and the operators $T_{1}$ and $M_{1/\theta}$ 
provide a faithful representation of $A_{1/\theta}$ on $\LtR$. In parallel with \eqref{smoothA}, the smooth algebra $\cA_{1/\theta}$ is now made of elements
\begin{equation}\label{smoothB}
  b=\sum_{k,l\in\ZZ}b_{kl} \pikthl , \qquad \textup{for} \quad {\bf b}=(b_{kl})\in\cS(\ZZt) \, 
\end{equation}
For the construction of projections in $\cA_\theta$ we will use an equivalence bimodule between $\cA_\theta$ and an algebra $\cB$.
Since this requires $\cB$ to act from the right, instead of $\cA_{1/\theta}$ we have to use for $\cB$ the 
opposite algebra $(\cA_{1/\theta})^{\mathrm{op}}$ which we identify with $\cA_{-1/\theta}$.

An equivalence bimodule between the smooth algebras $\cA=\cA_\theta$ and $\cB=\cA_{-1/\theta}$ is given once more by $\cE = \SR$ \cite{ri88} (see also \cite{lu09}). We state this result as a lattice version of Proposition~\ref{bimoyal} for the Moyal plane.
\begin{proposition}\label{binctl}
The space $\cE=\SR$ is an equivalence bimodule between the smooth noncommutative tori $\cA$ and $\cB$ 
with respect to the actions: 
\begin{align}
a\cdot\xi =\sum_{k,l\in\ZZ}a_{kl}\pithkl \xi , \quad \textup{and} \quad \xi\cdot b =\sum_{k,l\in\ZZ}b_{kl}\pikthl^* \xi,
\end{align}
for $a\in \cA, b\in \cB$ and $\xi,\eta\in\SR$; and with hermitian products given, for $\xi,\eta\in\SR$, by
\begin{align}
		\lhs{\xi}{\eta} & =\theta \sum_{k,l\in\ZZ}V_\eta\xi(\theta k,l)\pithkl , \nonumber \\ \quad \textup{and} \quad
		\rhs{\xi}{\eta} & =\sum_{k,l\in\ZZ} V_{\xi}\eta(k,{l}{\theta^{-1}})\pikthl . 
\end{align}
\end{proposition}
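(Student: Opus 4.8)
The plan is to verify that the four required properties of an equivalence bimodule — the bimodule property, the two hermitian-product axioms, and the associativity (compatibility) condition \eqref{ascond} — all hold for these lattice-restricted versions of the Moyal-plane structures, reducing everything to the corresponding facts already established in \S\ref{se:sr} and in Proposition~\ref{bimoyal}. The guiding idea is that each formula here is a ``periodized'' or ``sampled'' version of its Moyal-plane counterpart: the integrals $\iint_{\RRt}$ are replaced by sums over the lattices $\theta\ZZ\times\ZZ$ and $\ZZ\times\theta^{-1}\ZZ$, weighted by $\theta$ on one side so that the two lattices have reciprocal covolumes. The key algebraic input is the twisted convolution identity \eqref{twmc}, $V_{\eta_2}\xi_2\natural V_{\eta_1}\xi_1 = \la\xi_1,\eta_2\ra V_{\eta_1}\xi_2$, which is what makes the hermitian products land in the right algebras and satisfy the projection-type identities.

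First I would check that the actions are well-defined and that the hermitian products take values in the claimed algebras: since $\xi,\eta\in\SR$, Lemma~\ref{schwartz} gives $V_\eta\xi\in\SRt$, so the sampled sequences $(V_\eta\xi(\theta k,l))_{k,l}$ and $(V_\xi\eta(k,\theta^{-1}l))_{k,l}$ are rapidly decreasing, i.e.\ lie in $\cS(\ZZt)$; hence $\lhs{\xi}{\eta}\in\cA_\theta$ by \eqref{smoothA} and $\rhs{\xi}{\eta}\in\cA_{1/\theta}$ by \eqref{smoothB}, and after passing to the opposite algebra it lies in $\cB=\cA_{-1/\theta}$. One must also confirm the symmetry/positivity: $\lhs{\eta}{\xi}=\lhs{\xi}{\eta}^*$ follows from \eqref{timsr} together with the twisted-involution formula for $\cA_\theta$, and $\lhs{\xi}{\xi}\ge 0$ follows because $\lhs{\xi}{\xi}=\pi(\theta\sum V_\xi\xi(\theta k,l)\pithkl)$ is, by \eqref{twmc} applied on the lattice, of the form $T^*T$; analogously for the right structure. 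Left- and right-linearity over $\cA$ and $\cB$ reduce, via \eqref{tcon1}, to the statement that sampling a twisted convolution of Schwartz functions on $\RRt$ along a lattice equals the twisted convolution of the sampled sequences — this is precisely the compatibility between the integrated representation \eqref{intrep}, \eqref{tcon2} and the lattice twisted convolution in \S\ref{se:nct}.

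The heart of the matter is the associativity condition $\lhs{\xi}{\eta}\cdot\zeta=\xi\cdot\rhs{\eta}{\zeta}$. Expanding the left side, $\lhs{\xi}{\eta}\cdot\zeta=\theta\sum_{k,l}V_\eta\xi(\theta k,l)\,\pithkl\zeta$, which by the reconstruction/Moyal machinery is a lattice version of the integral reconstruction formula; expanding the right side, $\xi\cdot\rhs{\eta}{\zeta}=\sum_{k,l}V_\eta\zeta(k,\theta^{-1}l)\,\pikthl^*\xi$. To match these one pairs both sides against an arbitrary $\varphi\in\LtR$ (or against $\SR$ for a dense statement) and uses the Moyal identity \eqref{moid} in its sampled form; concretely, the equality becomes a Poisson-type summation/Janssen-representation statement relating $\sum_{\lambda\in\theta\ZZ\times\ZZ}\la\xi,\pi(\lambda)\eta\ra\la\pi(\lambda)\zeta,\varphi\ra$ with $\frac1\theta\sum_{\mu\in\ZZ\times\theta^{-1}\ZZ}\la\zeta,\pi(\mu)\eta\ra\la\pi(\mu)\xi,\varphi\ra$ — that is, the fundamental identity of Gabor analysis (the Wexler--Raz / Janssen relation), valid because the adjoint lattice of $\theta\ZZ\times\ZZ$ with respect to $c_{\mathrm{symp}}$ is exactly $\ZZ\times\theta^{-1}\ZZ$, as already noted in the text preceding the proposition. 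This is the step I expect to be the main obstacle, since it is where one actually invokes Gabor duality rather than a routine rewriting; everything else is bookkeeping. Finally, having verified all the axioms at the smooth level one may, exactly as in the proof of Proposition~\ref{bimoyal}, note that completing in the appropriate norms and using inverse-closedness of $\cA_\theta$ in $A_\theta$ (and of $\cB$ in $A_{-1/\theta}$) upgrades the statement from a formal bimodule to a genuine equivalence $\cA-\cB$-bimodule.
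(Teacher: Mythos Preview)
The paper does not actually prove this proposition: it is stated as a known result, citing \cite{ri88} and \cite{lu09}, and framed merely as ``a lattice version of Proposition~\ref{bimoyal} for the Moyal plane.'' So there is no proof in the paper to compare against in detail.

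Your sketch is correct and in fact supplies what the paper omits. You identify the right ingredients: Lemma~\ref{schwartz} for rapid decay of the sampled coefficients, the twisted-involution identity \eqref{timsr} for conjugate-symmetry, and above all the recognition that the associativity condition \eqref{ascond} on the lattice is precisely the fundamental identity of Gabor analysis (the Janssen representation / Wexler--Raz relation), which holds because $\ZZ\times\theta^{-1}\ZZ$ is the adjoint lattice of $\theta\ZZ\times\ZZ$ with respect to $c_{\mathrm{symp}}$. That last point is indeed the substantive step, and it is exactly what the references \cite{ri88,lu09} establish. One small caution: your positivity argument for $\lhs{\xi}{\xi}$ (``of the form $T^*T$'') is a bit loose as written; the cleanest route is to observe that $\lhs{\xi}{\xi}$ acts on $\LtR$ as the Gabor frame operator $\sum_{k,l}\la\,\cdot\,,\pithkl\xi\ra\pithkl\xi$, which is manifestly positive, and then invoke the faithfulness of the representation.
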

\noindent
From the expression \eqref{mcsr}, the coefficients of the lattice Schr\"odinger representation become
\beq
V_\eta\xi(\theta k,l) = \int_\RR \xi(t)\overline{\eta(t-\theta k)}e^{-2\pi\theta lt}dt
\eeq
and with a similar formula for the coefficients in the right hermitian product.

On the left algebra $\cA_\theta$ we shall use the (normalized) trace $\tr({\pi(\bf a}))= a_{00}$. Then, the compatibility 
$\tr \; \lhs{\xi}{\eta} =  \tr \rhs{\eta}{\xi}$, for all $\xi, \eta \in \cE$, requires that the trace on the right algebra 
$\cA_{-1/\theta}$ be the (non-normalized) one given by $\tr({\pi(\bf b)})=\theta\,b_{00}$.  

The infinitesimal action of an undeformed torus $\IT^2$ on both algebras $\cA_\theta$ and $\cA_{-1/\theta}$, are derivations. 
For lattice versions of the noncommutative tori the derivations on $\cA_\theta$ are just
\begin{align}\label{2dernct}
\pa_1(a) & = 2\pi \mi \, 
\sum_{k,l}k a_{k,l}\pithkl \, \nonumber \\ \quad \textup{and} \qquad  
\pa_2(a) & =2\pi \mi \, 
\sum_{k,l}l a_{k,l}\pithkl , 
\end{align}
and the ones on $\cA_{-1/\theta}$ are then
\begin{align}\label{derivdual}
\pa_1(b) &= -2\pi \mi \theta^{-1}\sum_{k,l}k b_{k,l}\pikthl^* \, \nonumber \\ \quad \textup{and} \qquad 
\pa_2(b) &= -2\pi \mi \theta^{-1} \sum_{k,l}l b_{k,l}\pikthl^* .
\end{align}
The derivations lift to covariant derivatives $\nabla_1$, $\nabla_2$ on the equivalence bimodules $\cE=\SR$ which are given by:
\beq\label{ccctorus}
  (\nabla_1\xi)(t)=2\pi \mi \, \theta^{-1} \, t \, \xi(t) \qquad \textup{and} \qquad (\nabla_2\xi)(t) = \frac{\dd \xi (t)}{\dd t} =: \xi^\prime(t) .
\eeq
The covariant derivatives satisfy
\begin{align*}
\nabla_1(a\cdot \xi)(t) &= 2\pi \mi \theta^{-1} \sum_{k,l} a_{kl}\, t \pithkl\xi (t)\\
                          &= 2\pi \mi 
\sum_{k,l} ka_{kl}\,\pithkl\xi(t) + 2\pi\mi \theta^{-1}  \sum_{k,l} ka_{kl}\,\pithkl [t \xi(t)]    
\end{align*}
and
\begin{align*}
  \nabla_2(a\cdot \xi)(t)
  =\sum_{k,l} a_{kl}\,[\pithkl\xi]^\prime(t)
  =2\pi\mi\sum_{k,l} l a_{kl}\,\pithkl\xi(t) +  \sum_{k,l} a_{kl}\,\pithkl\xi^\prime(t).
\end{align*}
With the derivations in \eqref{2dernct}, these are just the left Leibniz rule for the connection:
\beq
\nabla_j (a\cdot \xi)=(\partial_j a)\cdot \xi+a\cdot(\nabla_j \xi), \qquad j =  1,2 .
\eeq

Similarly to the Moyal plane case the covariant derivatives are compatible with the (left) hermitian structure:
$$ %\label{lmcnct}
\partial_j (\lhs{\xi}{\eta})=\lhs{\nabla_j \xi}{\eta} + \lhs{\xi}{\nabla_j \eta}, \qquad j =  1,2 .
$$
In view of the definition of the left hermitian product in Proposition \ref{binctl}, these again reduce to statements on the corresponding integrands.
For $j=1$ this is
$$
  2\pi \mi \, \theta\, k\,  V_\eta \xi (\theta k,l) =V_\eta \tilde{\xi} (\theta k,l) +V_{\tilde{\eta}} \xi (\theta k,l) ,
$$
with $\tilde{\xi}(t) = t \xi(t)$ and similarly for $\tilde{\eta}$. When $j=2$ it is instead:
\beq
  2\pi \mi \, \theta\, V_\eta\xi(\theta k,l)=V_\eta\xi^\prime(\theta k,l)+V_{\eta^\prime}\xi(\theta k,l) ,
\eeq
both of which are the relations of the Moyal case when restricted to the lattice $\thZZ$.
 
In a completely similar manner one establishes the `dual' version of the above statements, that is, the
right Leibniz rule:
\beq
\nabla_j (\xi \cdot b)= (\nabla_j \xi) \cdot b + \xi \cdot (\partial_j b), \qquad j =  1,2 ;
\eeq
and the compatibility with the right hermitian product:
$$ %\label{rmcnct}
\rhs{\nabla_j \xi}{\eta} + \rhs{\xi}{\nabla_j \eta} = \partial_j (\rhs{\xi}{\eta}) , \qquad j =  1,2 .
$$

Finally we observe that the connection has constant curvature:
\beq\label{cocurv}
F_{1,2} := [\nabla_1, \nabla_2] = - 2 \pi\mi \, \theta^{-1} \, \mathrm{id_\cE} , 
\eeq
which acts on the module $\cE=\SR$ on the left.

\subsection{Duality and Gabor frames}\label{dgf}
From Proposition~\ref{binctl}, there exists a standard module Parseval frame $\{\eta_1,...,\eta_n\}$ for $\SR$, that is
each $\xi\in\SR$ has an expansion,  
\begin{equation*} 
\xi=\lhs{\xi}{\eta_1}\eta_1+\cdots+\lhs{\xi}{\eta_n}\eta_n ,
\end{equation*}
and $\SR$, as a module over $\cA_\theta$, is of finite rank and projective. 

%It turns out (see \cite{feka04,lu14}) 
It turns out \cite{feka04} 
that for $0<\theta<1$, the module $\SR$ is given by a projection 
in $\cA_\theta$ itself. Thus, for $0<\theta<1$ 
one can use a one-element Parseval frame $\eta$, and 
from $\xi=\lhs{\xi}{\eta}\eta$ for any $\xi\in\SR$, the associativity condition yields that $\rhs{\eta}{\eta}=1$ so that 
$\lhs{\eta}{\eta}$ is a projection in $\cA_\theta$ as in Lemma~\ref{Rieffel}, and moreover, $\SR=\cA_\theta(\lhs{\eta}{\eta})$. 

We know from the discussion in \S\ref{prbim} that from any standard module frame $\eta$ one gets 
a Parseval frame $\tilde{\eta}$ by taking the element $\tilde{\eta}:=\eta(\rhs{\eta}{\eta})^{-1/2}$.
Now, from Proposition~\ref{binctl}, elements of $\cB = \cA_{-1/\theta} \simeq (\cA_{1/\theta})^{\mathrm{op}}$, in particular those of the form $\rhs{\xi}{\eta}$ act `on the left' on $\SR$ via the conjugate representation $\pi^*$ and elements $\pikthl^*$. A standard module frame for the $\cA_\theta$-module $\SR$ has an important property established in \cite{dalala95,ja95,rosh97} which is known as 
{\it duality principle}. Firstly, the $\cA_\theta$-module $\SR$ 
has a single generator $\eta$ if and only if the set $\{\pikthl \eta : k,l\in\ZZ\}$  
is a Riesz basis for its closed linear span $\overline{\mathrm{span}\{\pikthl \eta : k,l\in\ZZ\}}$, 
that is, there exist positive constants $c_1,c_2$ such that 
$$
c_1\sum_{k,l}|a_{k,l}|^2\le\|\sum_{k,l}a_{k,l}\pikthl \eta\|^2\le c_2 \sum_{k,l}|a_{k,l}|^2
$$
for all $(a_{k,l})\in\ell^2(\ZZt)$. This is in parallel with condition \eqref{frame} for the existence of a standard module frame.
As a consequence (cf. \cite{dalala95,ja95,rosh97}), 
with a single left generator $\eta$ and function $\eta^\circ:=\eta\,\rhs{\eta}{\eta}^{-1}$, 
one gets a bi-orthogonal basis $\{\pikthl \eta^\circ : ~k,l\in\ZZ\}$ for 
$\overline{\mathrm{span}\{\pikthl \eta : k,l\in\ZZ\}}$. 
In turn, elements $\xi \in \SR$ have an expansion over $\cB$ as well,
\beq\label{Riesz}
  \xi = \eta\rhs{\eta^\circ}{\xi} ,
\eeq
where $\rhs{\eta^\circ}{\xi}\in\cB$ is uniquely determined. 
In particular, for the Parseval standard module frame $\tilde{\eta}:=\eta(\rhs{\eta}{\eta})^{-1/2}$ 
the {\it duality principle}, also known as {\it Wexler--Raz identity}, reads as an expansion of each $\xi$ in $\SR$ both over $\cA$ and over $\cB$, 
\beq\label{RieszParseval}
  \xi = \lhs{\xi}{\tilde{\eta}} \tilde{\eta} = \tilde{\eta}\rhs{\tilde{\eta}}{\xi} ,
\eeq 
with $\lhs{\xi}{\tilde{\eta}}\in\cA$ and $\rhs{\tilde{\eta}}{\xi}\in\cB$ which are uniquely determined.

\medskip

The link between projective modules over $\cA_\theta$ and signal analysis \cite{lu09,lu11}
provides new classes of projections in $\cA_\theta$ out of Gabor frames. As explicit examples of the latter, we mention here 
Hermite functions and totally positive functions of finite type.

The Hermite functions $\psi_k$ are defined as $\psi_k(t)=c_k e^{\pi t^2}\frac{d^k}{dt^k}e^{-2\pi t^2}$, for $c_k$ an irrelevant normalization constant. 
In the context of Gabor analysis, it was shown in \cite{grly09} that each $\psi_k$ is a standard module frame for $\cS(\RR)$ (as left $\cA_\theta$-module) 
whenever $0<\theta<(k+1)^{-1}$.  

Further examples come from totally positive functions. We recall the following \cite{Sc47,Sc51}:
\begin{definition}\label{tpf}
The function $\eta \in \cS(\RR)$ is said to be \emph{totally positive} if for every two sets of increasing 
real numbers $x_1<\cdots<x_N$ and $y_1<\cdots<y_N$, it holds that
$$
\mathrm{Det} \big(\eta(x_j-y_k)\big)_{1\le j,k\le N} \geq 0 .
$$
Moreover, a totally positive function $\eta$ is of {\it finite type} $M\in\mathbb{N}$ with $M\ge 2$, if its Laplace transform 
$\widehat{\eta}$ has the form
\begin{equation*}
\widehat{\eta}(\omega)=e^{-\delta \omega^2}e^{-\delta_0\omega}\prod_{j=1}^M(1+2\pi i\delta_j\omega)^{-1} ,
\end{equation*}
for real non-zero parameters $\delta_j$, and $\delta>0$. 
\end{definition}
The Gaussian is totally positive. It addition to Gaussians, an example of such a function is $\eta(t)=\cosh(t)^{-1}$.  
Again in the context of signal analysis, it was shown in \cite{grst13} 
that any totally positive function of finite type (greater than $2$) $\eta$ satisfies the condition $\xi=\lhs{\xi}{\eta}\eta$, 
for all $\xi\in\SR$ if and only if $0<\theta<1$. 
Hence, totally positive (Schwartz) functions $\eta$ of finite type are standard module frames for $\SR$ over $\cA_\theta$ precisely for $0<\theta<1$. 

Recall from the discussion of \S\ref{prbim} that from a standard module frame $\eta$ for $\SR$ it is possible to pass to a Parseval frame 
$\tilde{\eta}:=\eta(\rhs{\eta}{\eta})^{-1/2}$. Then, by the  discussion of projections from equivalence bimodules again in 
\S\ref{prbim} we have the following result.
	
\begin{lemma}\label{totposhermproj}
For a standard module frame $\eta$ for $\SR$ over $\cA_\theta$, let $\tilde{\eta}:=\eta(\rhs{\eta}{\eta})^{-1/2}$ be the corresponding Parseval frame. Then:
\item{1.)}  
The Hermite function $\eta=\psi_k$ gives a projection $p_k=\lhs{\tilde\eta}{\tilde \eta} \in \cA_\theta$, 
if $0<\theta<(k+1)^{-1}$. 
\item{2.)} Let $\eta$ be any totally positive function in $\SR$ of finite type greater than $2$ as in Definition~\ref{tpf}. 
Then,   
$p_{\tilde\eta}=\lhs{\tilde\eta}{\tilde \eta}$ is a projection in $\cA_\theta$ if and only if $0<\theta<1$. 
\end{lemma}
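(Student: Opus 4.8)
The plan is to derive both statements from Lemma~\ref{Rieffel} by normalizing each standard module frame to a Parseval frame, exactly as in the general discussion of \S\ref{prbim}. The analytic input is entirely imported and already recorded in \S\ref{dgf}: restated in module language, \cite{grly09} says that the Hermite function $\psi_k$ is a standard module frame for $\SR$ as a left $\cA_\theta$-module whenever $0<\theta<(k+1)^{-1}$, while \cite{grst13} says that a totally positive Schwartz function $\eta$ of finite type greater than $2$ is a standard module frame for $\SR$ over $\cA_\theta$ if and only if $0<\theta<1$.

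The crux is the passage from a standard module frame to a projection. If $\eta$ is a standard module frame for $\SR$ over $\cA_\theta$, then by the associativity condition \eqref{ascond} the frame operator $\xi\mapsto\lhs{\xi}{\eta}\eta$ equals right multiplication by the element $\rhs{\eta}{\eta}\in\cB=\cA_{-1/\theta}$; by \cite{frla02} the frame inequality \eqref{frame} is therefore equivalent to $\rhs{\eta}{\eta}$ being positive and invertible in $\cB$. Hence, exactly as recalled in \S\ref{prbim}, $\tilde\eta:=\eta(\rhs{\eta}{\eta})^{-1/2}$ is a well-defined element of $\SR$ (the inverse square root staying in the smooth algebra by spectral invariance of the smooth torus), and right-linearity of $\rhs{\cdot}{\cdot}$ together with self-adjointness of $\rhs{\eta}{\eta}$ gives $\rhs{\tilde\eta}{\tilde\eta}=\rhs{\eta}{\eta}(\rhs{\eta}{\eta})^{-1}=1_{\cB}$, so $\tilde\eta$ is a Parseval frame. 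Lemma~\ref{Rieffel} then yields that $p_{\tilde\eta}=\lhs{\tilde\eta}{\tilde\eta}$ is a projection in $\cA_\theta$. Combined with the two imported frame theorems this proves part 1.) and the ``if'' direction of part 2.).

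For the ``only if'' direction of part 2.), I would run the chain of equivalences in reverse: for $p_{\tilde\eta}$ to be defined at all, $\rhs{\eta}{\eta}$ must be invertible in $\cB$, which by the identification of the frame operator above means $\eta$ is a standard module frame for $\SR$ over $\cA_\theta$, and by \cite{grst13} this happens only when $0<\theta<1$. The step I expect to carry the real weight is precisely this last appeal to \cite{grst13}: recognizing invertibility of $\rhs{\eta}{\eta}$ in the smooth algebra --- equivalently, the Riesz-sequence property of $\{\pikthl\eta\}$ on the adjoint lattice $\ZZ\times\theta^{-1}\ZZ$ --- as the Gabor-frame property of $\{\pithkl\eta\}$ is exactly the duality/Wexler--Raz mechanism recalled in \S\ref{dgf}, and it is there that the sharp threshold $\theta=1$ originates. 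Everything else in the argument is Hilbert-module bookkeeping already set up in \S\ref{prbim} and \S\ref{se:eqbi}.
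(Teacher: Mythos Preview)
Your proposal is correct and follows the same route as the paper: the lemma is stated without a formal proof environment, and the justification given is precisely the one you outline --- cite \cite{grly09} and \cite{grst13} for the frame property, normalize to a Parseval frame via $\tilde\eta=\eta(\rhs{\eta}{\eta})^{-1/2}$ as in \S\ref{prbim}, and apply Lemma~\ref{Rieffel}. Your write-up is in fact more explicit than the paper's, particularly in spelling out the ``only if'' direction of part~2.) and in noting that spectral invariance keeps $(\rhs{\eta}{\eta})^{-1/2}$ in the smooth algebra.
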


All of these projections have topological charge equal to $1$. This follows from Lemma~\ref{tcconst}, with the curvature \eqref{cocurv}, and from   
the unit of $\cA_{-1/\theta}$ having trace equal to $\theta$.

\subsection{Solitons}

It was shown in \cite{DKL00,DKL03} that Gaussian functions are solitons on noncommutative tori, 
in the sense that they solve a self-duality equation. Using results from signal analysis we are able to 
extend these results: any single standard module frame generator $\eta$ for $\SR$ is a soliton on 
noncommutative tori. This means that Gabor frames are solutions 
of the self-duality equation. Thus, by our discussion above, totally positive functions of finite 
type and Hermite functions provide two classes of new solitons that generalize the Gaussian solitons.   

As for the general situation in Proposition~\ref{main0} (and the Moyal case of \S\ref{sol-moy}), 
once again, the projection $p_\psi := \lhs{\psi}{\psi}$ is a solution of the self-duality equations,
\beq\label{sdbismt}
\overline{\pa} (p_\psi) ~p_\psi = 0.
\eeq
if and only if the element $\psi$ satisfies the condition:
\beq\label{sdahmt}
\onabla\psi = \psi \rhs{\psi}{\onabla\psi} ,
\eeq
with $\onabla=\nabla_1 + \mi \nabla_2$ the anti-holomorphic connection and $\nabla_1$ and $\nabla_2$
the connection \eqref{ccctorus}. 

By Lemma \ref{totposhermproj} and the discussion in \S\ref{dgf} leading to the duality property \eqref{RieszParseval}, we have a class of solitons that generalizes the Gaussian solitons in two different ways.
\begin{proposition}
Let $\psi$ be a standard module Parseval frame for the projective $\cA_\theta$-module $\SR$. Then the corresponding projection 
$p_\psi := \lhs{\psi}{\psi}$ satisfies the 
soliton equation:
\beq
  \overline{\pa} (p_\psi) ~p_\psi = 0.  
\eeq 
\end{proposition}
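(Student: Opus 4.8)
The plan is to reduce the claim to the generalized eigenvector criterion \eqref{sdah0} of Proposition~\ref{main0}, and then to verify that criterion for a Parseval frame. Since $\psi$ is a standard module Parseval frame for $\SR$ as a left $\cA_\theta$-module, the discussion at the start of \S\ref{dgf} gives that $\rhs{\psi}{\psi} = 1_\cB$ (this is the one-generator reconstruction $\xi = \lhs{\xi}{\psi}\psi = \xi\rhs{\psi}{\psi}$ together with the associativity condition \eqref{ascond}), so $p_\psi := \lhs{\psi}{\psi}$ is indeed a projection in $\cA_\theta$ by Lemma~\ref{Rieffel}. By Proposition~\ref{main0} applied to the equivalence bimodule $\cE = \SR$ between $\cA = \cA_\theta$ and $\cB = \cA_{-1/\theta}$, with the anti-holomorphic connection $\onabla = \nabla_1 + \mi\,\nabla_2$ lifting $\opa = \pa_1 + \mi\,\pa_2$ via the covariant derivatives \eqref{ccctorus}, the equation $\opa(p_\psi)\,p_\psi = 0$ is equivalent to the existence of some $\lambda \in \cB$ with $\onabla\psi = \psi\lambda$.

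So it remains to exhibit such a $\lambda$. The natural candidate, forced by Proposition~\ref{main0}'s computation, is $\lambda = \rhs{\psi}{\onabla\psi} \in \cB$, i.e. one must show $\onabla\psi = \psi\,\rhs{\psi}{\onabla\psi}$ — this is precisely equation \eqref{sdahmt}. The key point is that this is nothing but the reconstruction formula \eqref{Riesz}–\eqref{RieszParseval}: for a Parseval frame $\psi$, every element $\xi \in \SR$ admits the expansion $\xi = \psi\,\rhs{\psi}{\xi}$ over $\cB$ (the Wexler--Raz identity \eqref{RieszParseval} with $\tilde\eta$ replaced by $\psi$). Since $\onabla\psi \in \SR = \cE$, applying this expansion with $\xi = \onabla\psi$ gives exactly $\onabla\psi = \psi\,\rhs{\psi}{\onabla\psi}$. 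Hence the hypothesis of Proposition~\ref{main0} is met with $\lambda = \rhs{\psi}{\onabla\psi}$, and therefore $\opa(p_\psi)\,p_\psi = 0$.

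The only subtlety to check — and the step I expect to be the main obstacle — is that the dual-frame expansion \eqref{RieszParseval} is genuinely available for $\onabla\psi$, that is, that $\onabla\psi$ lies in the bimodule $\cE = \SR$ on which the expansion is valid. This is where one uses that $\nabla_1$ multiplies by $2\pi\mi\,\theta^{-1} t$ and $\nabla_2$ differentiates, both of which preserve the Schwartz class $\SR$; so $\onabla\psi \in \SR$ and the reconstruction formula applies verbatim. One should also note that \eqref{RieszParseval} as stated in \S\ref{dgf} was derived precisely for a Parseval standard module frame, which is our hypothesis on $\psi$, so no passage $\eta \mapsto \tilde\eta$ is needed here. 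Putting these together: $\psi$ is a generalized eigenvector of $\onabla$ with eigenvalue $\rhs{\psi}{\onabla\psi} \in \cB$, and by Proposition~\ref{main0} the projection $p_\psi = \lhs{\psi}{\psi}$ solves $\opa(p_\psi)\,p_\psi = 0$, which is the assertion.
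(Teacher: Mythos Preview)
Your proposal is correct and follows essentially the same approach as the paper: apply the duality formula \eqref{RieszParseval} to the element $\onabla\psi\in\SR$ to obtain $\onabla\psi=\psi\,\rhs{\psi}{\onabla\psi}$, and then invoke the equivalence of \eqref{sdbismt} and \eqref{sdahmt} from Proposition~\ref{main0}. Your additional remarks that $\onabla$ preserves $\SR$ and that $\rhs{\psi}{\psi}=1_\cB$ simply make explicit what the paper leaves implicit.
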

\begin{proof}
From the duality formula \eqref{RieszParseval} applied to $\onabla\psi$, we know that the uniquely determined element 
$\lambda=\rhs{\psi}{\onabla\psi}\in\cA_{-1/\theta}$ is such that $\onabla\psi-\psi\lambda=0$. The equivalence of \eqref{sdbismt} and \eqref{sdahmt} then establishes the result. 
\end{proof}

As in Lemma \ref{totposhermproj}, we know how to pass from a standard module frame $\eta$ for $\SR$ to a Parseval one.
If we denote this as $\psi:=\eta(\rhs{\eta}{\eta})^{-1/2}$, we have the following result: 
\begin{corollary}
With $\psi:=\eta(\rhs{\eta}{\eta})^{-1/2}$, the projection $p_\psi=\lhs{\psi}{\psi}\in\cA_\theta$ satisfies the 
self-duality equations in the following cases:
\item{1.)}  For $0<\theta<(k+1)^{-1}$, if $\eta$ is the k-th Hermite functions $\psi_k$.
\item{2.)}  For $0<\theta<1$, if $\eta$ is a totally positive function in $\SR$ of finite type greater than $2$, as given  in Definition~\ref{tpf}. 
\end{corollary}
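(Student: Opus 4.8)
The plan is to obtain the Corollary as an immediate consequence of the preceding Proposition together with Lemma~\ref{totposhermproj}. The Proposition says that \emph{any} standard module Parseval frame $\psi$ for the projective $\cA_\theta$-module $\SR$ yields a projection $p_\psi=\lhs{\psi}{\psi}\in\cA_\theta$ solving $\opa(p_\psi)\,p_\psi=0$. Hence the only thing that has to be checked, in each of the two cases, is that the element $\psi:=\eta(\rhs{\eta}{\eta})^{-1/2}$ genuinely is such a Parseval frame on the stated $\theta$-range; and this is exactly the content assembled in \S\ref{prbim} and in Lemma~\ref{totposhermproj}.

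First I would recall the normalization step from \S\ref{prbim}: if $\eta$ is a standard module frame for $\SR$ over $\cA_\theta$, then $\rhs{\eta}{\eta}\in\cB=\cA_{-1/\theta}$ is positive and invertible \cite{frla02}, and since $\cA_{-1/\theta}$ is inverse-closed in its $C^*$-completion its inverse square root $(\rhs{\eta}{\eta})^{-1/2}$ lies again in the smooth algebra $\cB$; therefore $\psi=\eta(\rhs{\eta}{\eta})^{-1/2}\in\SR$ and the right hermitian structure gives $\rhs{\psi}{\psi}=1_\cB$, so that by the associativity condition \eqref{ascond} one has $\xi=\lhs{\xi}{\psi}\psi$ for all $\xi\in\SR$, i.e. $\psi$ is a standard module Parseval frame. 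This is precisely the hypothesis of the preceding Proposition, which then produces the uniquely determined $\lambda=\rhs{\psi}{\onabla\psi}\in\cB$ with $\onabla\psi=\psi\lambda$, equation \eqref{sdahmt}, and hence $\opa(p_\psi)\,p_\psi=0$, equation \eqref{sdbismt}, for $p_\psi=\lhs{\psi}{\psi}$.

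It then remains only to feed in, for each case, the Gabor-analytic input that $\eta$ is a standard module frame on the claimed interval. For case 1.), the result of \cite{grly09} quoted before Lemma~\ref{totposhermproj} states that the $k$-th Hermite function $\psi_k$ is a standard module frame for $\SR$ as a left $\cA_\theta$-module if and only if $0<\theta<(k+1)^{-1}$; on that range the previous paragraph supplies the Parseval frame $\psi=\psi_k(\rhs{\psi_k}{\psi_k})^{-1/2}$ and the soliton $p_\psi$. For case 2.), the result of \cite{grst13} states that a totally positive $\eta\in\SR$ of finite type greater than $2$ (Definition~\ref{tpf}) satisfies $\xi=\lhs{\xi}{\eta}\eta$ for all $\xi\in\SR$ if and only if $0<\theta<1$, so $\eta$ is a standard module frame exactly on that range, and the same argument yields $\psi$ and $p_\psi$.

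There is no serious obstacle here: the analytic content — Hermite and totally positive functions of finite type generating the Heisenberg module precisely on the indicated $\theta$-intervals — is imported verbatim from \cite{grly09,grst13}, while the algebraic content — that a Parseval frame produces a self-dual projection through the generalized eigenvector equation — is the Proposition. The one point deserving a line of justification is that the rescaling by $(\rhs{\eta}{\eta})^{-1/2}$ does not leave the smooth algebra $\cA_{-1/\theta}$, which is the inverse-closedness already used in \S\ref{prbim}. I would close, as in the remark after Lemma~\ref{totposhermproj}, by noting that each such $p_\psi$ has topological charge $c_1(p_\psi)=1$, which follows from Lemma~\ref{tcconst} applied to the constant curvature \eqref{cocurv} together with $\tr(1_{\cA_{-1/\theta}})=\theta$.
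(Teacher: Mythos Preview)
Your proposal is correct and matches the paper's own approach: the Corollary is stated there without proof, as it follows immediately from the preceding Proposition (any Parseval standard module frame yields a self-dual projection) together with Lemma~\ref{totposhermproj} and the Gabor-analytic inputs from \cite{grly09,grst13} identifying the $\theta$-ranges on which $\eta$ is a standard module frame. You have simply spelled out the details the paper leaves implicit, including the rescaling step from \S\ref{prbim} and the topological-charge remark; one tiny nit is that for Hermite functions the paper (following \cite{grly09}) only asserts the frame property ``whenever'' $0<\theta<(k+1)^{-1}$, not ``if and only if'', but only the sufficient direction is needed here.
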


In particular, as already shown in \cite{DKL00,DKL03}, the Gaussian function  
$$ %\label{gau1}
\psi_\lambda(t) = c\, e^{- \frac{\pi}{\theta} \, t^2 -\mi \lambda t} \,, \qquad\text{for}~~\lambda\in\IC , 
$$
obeys the equation $\onabla \psi_\lambda  = \psi_\lambda \lambda$, and thus solves \eqref{sdahmt}. 
We can confirm now that the right hermitian product 
$\rhs{\psi_\lambda}{\psi_\lambda}$ is indeed invertible in $\cA_{-1/\theta}$ for all $0<\theta<1$ (a question left partially 
open in \cite{DKL00,DKL03}) so that the projections 
$p_\lambda=\lhs{\tilde\psi_{\lambda}}{\tilde\psi_{\lambda}}$, with
$\tilde{\psi_{\lambda}}:=\psi_{\lambda} ( \rhs{\psi_{\lambda}}{\psi_{\lambda}})^{-1/2}$
are solutions of the self-duality equation \eqref{sdbismt}.
The moduli space of such Gaussian solutions was found in \cite{DKL00,DKL03} to be a copy of the complex torus.

The case of projections in noncommutative tori generated by Gaussians and Gabor frames was also discussed in \cite{luma09}, 
where the complex structure was implicitly used. It was pointed out that these projections have extra structure, namely, 
they are quantum theta functions. Consequently, the results in this section yield that quantum theta functions are solutions of the 
self-duality equation and may be considered as a special class of solitons over noncommutative tori, a fact already alluded to in \cite{la06}.


\begin{thebibliography}{10}

\bibitem{CR87}
A. Connes,  M.A. Rieffel,
{\it Yang-Mills for non-commutative two-tori},
%in {\it Operator Algebras and Mathematical Physics}, 
Contemp. Math. 62 (1987) 237--266.

\bibitem{DKL00}  L. Dabrowski, T. Krajewski, G. Landi, {\it Some
properties of non-linear $\sigma$-models in noncommutative Geometry},
Int. J. Mod. Phys. B14 (2000) 2367--2382. % ; {\tt hep-th/0003099}

\bibitem{DKL03}  L. Dabrowski, T. Krajewski, G. Landi, {\it Non-linear
$\sigma$-models in noncommutative geometry: fields with values in finite spaces},
Mod. Phys. Lett. A18 (2003) 2371--2380. % ; {\tt math.QA/0309143}.

\bibitem{dalala95}
I.~{D}aubechies, H.J.~{L}andau, Z.~{L}andau.
\newblock {\it {G}abor time-frequency lattices and the Wexler--Raz identity}, 
\newblock {J. Fourier Anal. Appl.} 1 (1995) 437--478.

\bibitem{FS10} B.~{F}arrell, T.~{S}trohmer.
\newblock {\it {I}nverse--Closedness of a {B}anach algebra of integral operators
on the {H}eisenberg group},
\newblock {J. Oper. Theory} 64 (2010) 189--205. 

\bibitem{feka04}
H.G.~{F}eichtinger, N.~{K}aiblinger.
\newblock {\it {V}arying the time-frequency lattice of {G}abor frames}, 
\newblock { Trans. Amer. Math. Soc.} 356 (2004) 2001--2023. 

 \bibitem{fest98}
H.G.~{F}eichtinger, T.~{S}trohmer.
\newblock {\it {G}abor {A}nalysis and {A}lgorithms. {T}heory and
{A}pplications.}
\newblock {B}irkh{\"a}user, {B}oston, 1998.

\bibitem{frla02}
M.~{F}rank, D.R.~{L}arson.
\newblock {\it {F}rames in {H}ilbert ${C}^*$-modules and ${C}^*$-algebras}, 
\newblock { J. Operator Theory} 48 (2002) 273--314.

\bibitem{ga46}
D.~{G}abor.
\newblock {\it {T}heory of communication},
\newblock {{J}. {I}{E}{E}} 93 (1946) 429--457.

\bibitem{GGISV}
V. Gayral, J.M. Gracia-Bondia, B. Iochum, T. Sch\"ucker, J.C. Varilly,
{\it Moyal planes are spectral triples},
Commun. Math. Phys. 246 (2004) 569--623.

\bibitem{grly09}
K.~{G}r{\"o}chenig, Y.~{L}yubarskii.
\newblock {\it {G}abor (super)frames with {H}ermite functions}, 
\newblock { {M}ath. {A}nn.} 345 (2009) 267--286.
	
\bibitem{grst13}
K.~{G}r{\"o}chenig, J.~{S}t{\"o}ckler.
\newblock {\it {G}abor frames and totally positive functions}, 
\newblock {Duke Math. J.} 162 (2013) 1003--1031. 
	
\bibitem{ja95}
A.J.E.M.~ {J}anssen.
\newblock {\it {D}uality and biorthogonality for Weyl--Heisenberg frames},
\newblock { {J}. {F}ourier {A}nal. {A}ppl.} 1 (1995) 403--436.

\bibitem{la06} 
G. Landi.
{\it On harmonic maps in noncommutatige geometry}.
In: C. Consani and M. Marcolli (Eds.), {\it Noncommutative Geometry and Number Theory}, 
Aspects of Mathematics E37, pp. 217--234. Vieweg, Wiesbaden, 2006. 

\bibitem{le14}
 H.H.~Lee.
{\it A note on non-linear $\sigma$-models in noncommutative geometry}, 
arXiv:1410.5918. 

\bibitem{lu09}
F.~{L}uef.
\newblock {\it {P}rojective modules over noncommutative tori are multi-window Gabor 
frames for modulation spaces}, 
\newblock {{J}. {F}unct. {A}nal.} 257 (2009) 1921--1946.

\bibitem{lu11}
F.~{L}uef.
\newblock {\it {P}rojections in noncommutative tori and {G}abor frames}, 
\newblock { Proc. Amer. Math. Soc.} 139 (2011) 571--582. 

%\bibitem{lu14}
%F.~{L}uef.
%\newblock {\it {T}he density theorem of Gabor analysis}, 
%\newblock { {J}. {F}ourier {A}nal. {A}ppl.}, under revision.

\bibitem{luma09}
F.~{L}uef, Y.I.~ {M}anin.
\newblock {\it {Q}uantum theta functions and {G}abor frames for modulation spaces}, 
\newblock { Lett. Math. Phys.} 88 (2009) 131--161. 

\bibitem{MR11}
V.~Mathai, J.~Rosenberg.
{\it A noncommutative sigma-model}, 
{ J. Noncomm. Geom} 5 (2011) 265--294.

\bibitem{Ri72}
M.A.~{R}ieffel.
\newblock {\it {O}n the uniqueness of the {H}eisenberg commutation relations}, 
\newblock { Duke Math. J.} 39 (1972) 745--752.

\bibitem{ri74-2}
M.A.~{R}ieffel.
\newblock {\it {M}orita equivalence for {$C^{\ast} $}-algebras and {$W^{\ast}
$}-algebras}, 
\newblock {J. Pure Appl. Algebra} 5 (1974) 51--96.

\bibitem{Ri81}
M.A.~{R}ieffel.
\newblock {\it {C}*-algebras associated with irrational rotations}, 
\newblock { {P}ac. {J}. {M}ath.} 93 (1981) 415--429.

\bibitem{ri88}
M.A.~{R}ieffel.
\newblock {\it {P}rojective modules over higher-dimensional noncommutative tori},
\newblock { {C}an. {J}. {M}ath.} 40 (1988) 257--338.

\bibitem{ri10-2}
M.A.~{R}ieffel.
\newblock {\it {V}ector bundles and {G}romov--Hausdorff distance}, 
\newblock { {J}. {K}-{T}heory} 5 (2010) 39--103.

\bibitem{Sc47}
I.J. Schoenberg. 
\newblock {\it On totally positive functions, Laplace integrals and entire functions of the
Laguerre--P\'olya--Schur type}, 
\newblock {Proc. Natl. Acad. Sci. USA}, 33 (1947) 11--17.

\bibitem{Sc51}
I.J. Schoenberg. 
\newblock {\it On P\'olya frequency functions, I. The totally positive functions and their
Laplace transforms},
\newblock {J. Analyse Math.} 1 (1951) 331--374.

\bibitem{rosh97}
A.~{R}on, Z.~{S}hen.
\newblock {\it {W}eyl-{H}eisenberg frames and {R}iesz bases in ${L}_2(\mathbb{R}^d)$},
\newblock {{D}uke {M}ath. {J}.} 89 (1997) 237--282.

\bibitem{to07}
J.~{T}oft.
\newblock {\it {C}ontinuity and {S}chatten--von {N}eumann properties for
pseudo-differential operators on modulation spaces}. 
\newblock In: J.~{T}oft, M.~{W}ong, and H.~{Z}hu (Eds.), {\em {M}odern
{T}rends in {P}seudo-Differential {O}perators}, vol. 172 of {\em {O}perator
{T}heory: {A}dvances and {A}pplications}, pp. 173--206.
\newblock {B}irkh{\"a}user, {B}asel, 2007. 

\bibitem{rawe90}
J.~{W}exler, S.~{R}az.
\newblock {\it {D}iscrete {G}abor expansions}, 
\newblock { {S}ignal {P}roc.} 21 (1990) 207--220.
\end{thebibliography}
\end{document}